\documentclass[a4paper,UKenglish,cleveref, autoref, thm-restate]{lipics-v2021} 
\nolinenumbers
\hideLIPIcs

\usepackage{amsfonts}
\usepackage{mathtools}
\usepackage{calc}
\usepackage{hyperref}
\usepackage{graphicx}
\usepackage{multicol}
\usepackage{amsmath}
\usepackage{stmaryrd}
\usepackage{wrapfig}
\usepackage{xspace}
\usepackage{array}
\usepackage{cleveref}
\usepackage{todonotes}

\usepackage[ruled,algo2e]{algorithm2e}

\SetKwComment{Comment}{/* }{ */}
\Crefname{algocf}{Algorithm}{Algorithms}

\usepackage{tikz}
\usepackage{pgfplots}
\usetikzlibrary{decorations.markings}
\usetikzlibrary{backgrounds,folding}
\usetikzlibrary{shapes.geometric, shapes.misc}
\pgfdeclarelayer{edgelayer}
\pgfdeclarelayer{nodelayer}
\pgfsetlayers{background,edgelayer,nodelayer,main}
\tikzstyle{every picture}=[baseline=-0.25em]
\tikzstyle{none}=[inner sep=0mm]
\tikzstyle{black dot}=[inner sep=1pt,minimum width=0pt,minimum height=0pt,fill=black,draw=black,shape=circle, font=\scriptsize]
\tikzstyle{dot}=[black dot]
\tikzstyle{white dot}=[dot,fill=white]
\tikzstyle{box}=[rectangle,fill=white,draw=black, font=\scriptsize, inner sep=2pt]
\tikzstyle{box-no-outline}=[rectangle, draw=white, fill=white, inner sep=2pt]
\tikzstyle{sdash}=[-, dashed, dash pattern=on 2pt off 1pt, draw=black]
\tikzstyle{tensor}=[isosceles triangle, isosceles triangle apex angle=90,draw,scale=0.4,shape border rotate=-90]

\pgfdeclarelayer{edgelayer}
\pgfdeclarelayer{nodelayer}
\pgfsetlayers{background,edgelayer,nodelayer,main}
\tikzstyle{every loop}=[]

\usepackage{etoolbox}

\newtoggle{extern}
\toggletrue{extern}
\togglefalse{extern}

\newcommand{\tikzfig}[1]{
		\input{./figures/#1.tikz}
}

\def\fig{}

\iftoggle{extern}{
	\usetikzlibrary{external}
	\tikzexternalize[prefix=tikzfigs/]
	
	\renewcommand{\tikzfig}[1]{
		\tikzsetnextfilename{#1}
		
		\input{./figures/#1.tikz}
}
	
}{
	
}

\newcommand{\eq}[2][~~]{
	#1
	\underset{\substack{#2}}{=}
	#1
}

\newcommand{\braket}[2]{\ensuremath{\left\langle #1 | #2 \right\rangle}}
\newcommand{\cat}[1]{\mathbf{#1}}

\allowdisplaybreaks
\title{A Unique Normal Form for Tensor Trains over Arbitrary Fields}

\author{Renaud Vilmart}{Université Paris-Saclay, Inria, CNRS, ENS Paris-Saclay, Laboratoire Méthodes Formelles, 91190, Gif-sur-Yvette, France.}{renaud.vilmart@inria.fr}{https://orcid.org/0000-0002-8828-4671}{}

\authorrunning{R.~Vilmart}

\ccsdesc{Theory of computation~Data structures design and analysis}
\ccsdesc{Theory of computation~Equational logic and rewriting}

\keywords{Tensor trains, Normal form, Rewrite strategy, Uniqueness, Finite fields}

\begin{document}
	
	\maketitle
	
	\begin{abstract}
		Tensor trains (or Matrix-Product States) are a data structure used in many fields of computer science and physics. They were recently shown to generalise binary decision diagrams when used over the 2-element Galois field, prompting the question of their reducibility in such a context, when the standard approach, over real or complex number, is not amenable to finite fields.
		
		We provide here a unique normal form and associated polynomial-time reduction strategy for tensor trains over arbitrary fields. We also show how to directly extract a normal form out of a full tensor, how to get the leading index and value of a normal form, and an upper bound on the size of a fully-reduced tensor train relative to a naive storage of the full tensor.
		
		On the one hand, this work strengthens the use of tensor trains as a relevant formal tool. On the other hand, from the perspective of tensor networks, it extends the formalism to more general settings than the well-studied real and complex fields, and crucially provides the first tensor train form with the uniqueness property.
	\end{abstract}

	\section{Introduction}
	
	Tensors are multidimensional extensions of vectors and matrices \cite{Dupuy2025TTln}, used pervasively both for modelling physics (continuum mechanics, electromagnetism, general relativity, quantum mechanics, ...), and as a tool for computer science (computer vision, linear system solving, machine learning, quantum computing, ...).
	
	A tensor is a rather big object, the amount of data required to store it depending exponentially in its \emph{order} (i.e.~the number of dimensions; matrices and vectors being respectively order-2 and order-1 tensors). To deal with this issue, schemes have been devised to compress the data, by decomposing the tensor into a \emph{tensor network}: a collection of smaller-ordered tensors composed together following a given topology \cite{Orus2019}. 
	It is then possible to exploit the Singular Value Decomposition (SVD) to decrease the dimensions between the tensors, either exactly (by exploiting the rank-revealing property of the SVD), or approximately, by chopping off singular values below a given threshold, and hence performing low-rank approximations \cite{Oseledets2011Decomposition}.
	
	A first caveat of the SVD is that it is in general not unique, which means that SVD-based reduction schemes never enjoy a full uniqueness property \cite{Dupuy2025TTln}. The second is that it is only applicable to matrices over $\mathbb R$ or $\mathbb C$ (hence using floats in practice). For most considered applications, like the aforementioned, this is fine. However, it recently was shown that tensor trains (tensor networks consisting of order-3 tensors connected in a line) over $\mathbb F_2$ could be used to generalise Binary Decision Diagrams (BDDs) \cite{Onaka2025Tensor,quist2026}. In particular, it has been shown that there exist families of tensor trains over $\mathbb F_2$ of polynomial size, that would require BDDs of exponential size to represent them \cite{Onaka2025Tensor}.
	
	Binary decision diagrams \cite{Lee1959Representation} are a data structure used to represent boolean functions, and that make use of the functions' structure to reduce their size. These diagrams and their numerous variants (Zero-suppressed DDs \cite{Minato1993Zero}, BDDs with complemented edges \cite{Brace1991Efficient}, ...) can be and have been used for a plethora of applications, such as model checking \cite{Clarke1993SymbolicMC}, model counting \cite{Bryant1986GraphBased}, circuit synthesis \cite{Podlaski2016Reversible}, regular expression matching \cite{Yang2010Improving}, multiobjective discrete optimization~\cite{Bergman2016Multiobjective},~..., or simply for the storing of boolean polynomials~\cite{Brickenstein2009PolyBoRi}.
	
	The fact that BDDs are special cases of tensor trains over $\mathbb F_2$ gives an interesting bridge between two rather distant fields. In particular, it opens the possibility to use tensor trains rather than BDDs for many of the latter's application, with potentially better outcomes due to the better compactness of tensor trains. However, to properly assess the advantage obtained by using tensor trains, one would first need to address the two aforementioned caveats of SVD-based tensor train reductions: application in $\mathbb F_2$, and uniqueness.
	
	\emph{We address this in this paper, by providing a unique normal form for tensor trains over arbitrary fields, that minimises their size, and can be reached in polynomial time in their original size. We also provide an algorithm that directly builds a tensor train in normal form out of any arbitrary full tensor.}
	
	The developments in this paper are based on a matrix decomposition that is different from the SVD: the LU decomposition, whereby a matrix is decomposed as the product of a lower triangular matrix and an upper triangular matrix. Although a matrix may not have an LU decomposition, this can be fixed by first permuting its rows, leading to the so-called PLU decomposition. While this one always exists, it is not unique in general, which is inconvenient for our purposes. We hence use a variation of the LU decomposition, the LDPU decomposition, that we show to be unique. Crucially, just like the SVD, the LDPU decomposition is rank-revealing, the central feature that allows us to reduce the size of the tensor trains. While the LDPU (and subsequent tensor train normal form and reduction strategy) theoretically work with any field, in practice, it may suffer from exponential bit complexity and numerical instability, hence it is best fitted for finite fields, where such concerns disappear.
	
	The paper is structured as follows. 
	In \Cref{sec:TT}, we give some necessary background on tensor trains, a short explanation of the link between these and decision diagrams, and a string-diagram based representation to simplify their manipulation. In \Cref{sec:LDPU}, we define the LDPU decomposition, show its uniqueness, and give a few useful lemmas about matrices in row-echelon form. In \Cref{sec:NF}, we define the normal form, as well as an algorithm that directly turns a full tensor into its normal form as a tensor train. The existence and uniqueness of the tensor train is then shown there. Finally, in \Cref{sec:reduc}, we define the poly-time reduction strategy that brings any tensor train into the unique normal form defined previously. We also show how the first part of the reduction is already enough to infer some useful properties about the tensor train, namely, whether it is null or not (which induces an algorithm to check equivalence between two tensor trains), or what are its leading index and leading coefficient. All proofs are given in full in the appendix.
	
	\section{Tensor Trains}
	\label{sec:TT}
	
	Throughout this paper, we will work with tensors over arbitrary fields. Let's fix one, that we denote $F$. We write $0$ and $1$ for its additive and multiplicative neutral elements. For any tensor $T\in F^{n_1\times ... \times n_d}$, we write $T[i_1]$ for the tensor $F^{n_2\times ... \times n_d}$ obtained by setting the first index of $T$ to $i_1$. Hence, for a matrix $A$, $A[i]$ corresponds to the row $i$ of $A$, and for a vector $v$, $v[i]$ corresponds to the $i$-th element of $v$. We write $T[i_1,...,i_k]$ for $T[i_1][i_2]...[i_k]$.
	
	\begin{definition}
		Let $T\in F^{n_1\times ... \times n_d}$ be a tensor. A \emph{Tensor Train (TT) decomposition} of $T$ is a tuple of tensors $(C_1,...,C_d)$ such that:
		\begin{itemize}
			\item $C_j\in F^{n_j\times r_{j-1}\times r_j}$ ($r_0 = r_d = 1$), so that $C_j[i_j] \in F^{r_{j-1}\times r_j}$ for $0\leq i_j< n_j$
			\item $T[i_1,...,i_d] = C_1[i_1]...C_d[i_d]$
		\end{itemize}
		$d$ is called the \emph{order} of the tensor (the number of \emph{modes}), $n_j$ is called the $j$-th \emph{mode dimension}, and $C_j$ is called the $j$-th \emph{core} of the tensor train. The values $r_0,...,r_d$ are called the \emph{TT-ranks} of the tensor train. The overall number of $F$-entries, defined as follows, is called the \emph{size} of the tensor train: $s:=\sum\limits_{j=1}^{d}r_{j-1} r_j  n_j$.
		We denote $n:= \max_j(n_j)$ the maximum mode dimension, and $r:=\max_j(r_j)$ the maximum rank.
	\end{definition}
	
	The notion of tensor trains is sometimes generalised to the case where $r_0=r_d$ can take any value, which generalises the interpretation and gives more leeway in the manipulation. However, the more constrained version $r_0=r_d=1$ makes the reduction of tensor trains much easier, in particular because reaching the minimum in TT-ranks can be made one TT-rank at a time (the TT-ranks reach a minimum iff no individual TT-rank can be reduced).
	
	To make the presentation simpler in the following, we allow modes of tensors to be of dimension $0$. This also means we allow matrices with $0$ rows or with $0$ columns. Since such tensors have no entries, they are, by convention, considered null.
	
	Tensor trains allow in certain cases a more compact representation of the full tensor. That representation is polynomial in $d$, $n$ and $r$, and so are many operations on the tensor trains~\cite[Chapter~13]{Hackbusch2019Tensor}. For instance, building a tensor train that represents a linear combination of tensor trains, the element-wise product (a.k.a~Schur product) or the Kronecker product of tensor trains, can be done in polynomial time (and space). Many operations that we may want to apply on a tensor may also be implemented as a poly-time operation on the tensor train. There are hence many occurrences where the full tensor is never computed: the computation starts with a standard tensor of which we know an efficient tensor train representation, and then all operations are directly applied on the tensor train. It is then paramount to have efficient reduction strategies to keep the size of the tensor train under control.
	
	\subsection{BDDs as special cases}
	
	As a main motivation for the development of tensor trains for arbitrary fields, we briefly explain how Binary Decision Diagrams (BDDs) are special cases of tensor trains over $\mathbb F_2$, the $2$-element Galois field; where the cores are $1$-row-sparse \cite{Onaka2025Tensor}. A BDD (without variable skipping\footnote{When both the $0$ and $1$ outgoing edges of a given node $u$ point to the same node $v$, it is customary to skip node $u$ and redirect all its incoming edges to $v$. This variable skipping can easily be undone, and can only incur a linear overhead.}) is a rooted directed acyclic graph, where the nodes can be partitioned into $h$ groups such that edges can only go from one group to the next. The last group consists of 2 nodes labelled $\bot$ and $\top$. For each node except the last two, there are exactly two outgoing edges, labelled $0$ and $1$ respectively. Each group of nodes (except the last) is assigned a distinct variable, which allows the structure to represent a boolean function over these variables: one can easily evaluate the function for a given variable assignment, by following the path where the edge labels correspond to the values assigned to the variables. The evaluation of the function is then $1$ if and only if the path ends up in the node labelled~$\top$.
	
	Such a BDD is easy to translate into a tensor train $T\in \mathbb F_2^{2\times...\times 2}$. Indeed, consider the (ordered) groups $i$ and $i+1$. These nodes together with the $0$-edges between them form a bipartite graph, which we can represent by an $\mathbb F_2$ biadjacency matrix (with the rows corresponding to the nodes of group $i$, and the columns those of group $i+1$). For each group $i$ (but the last), we can then easily build a core $C_i$ such that $C_i[0]$ is the biadjacency matrix for the $0$-edges from group $i$ to group $i+1$, and $C_i[1]$ is the biadjacency matrix of the $1$-edges from group $i$ to group $i+1$. For the penultimate group, we only look at the edges to the $\top$ node (i.e.~we treat the last group as if it only contained the $\top$ node). It has been proven that this construction yields a tensor train that encodes the same boolean function. An example of a tensor train derived from a BDD is given in \Cref{fig:BDD-to-TT}.
	
	\begin{figure}[!ht]
		a) $\tikzfig{BDD-example}$
		\hfill
		b) 
		\renewcommand{\arraystretch}{0.8}
		\setlength{\arraycolsep}{2pt}
		\scalebox{0.8}{$
			\begin{array}{ccc|c}
				x_1&x_2&x_3&f\\
				\hline
				0&0&0&0\\
				0&0&1&1\\
				0&1&0&1\\
				0&1&1&1\\
				1&0&0&1\\
				1&0&1&0\\
				1&1&0&1\\
				1&1&1&1
			\end{array}$}
		\hfill
		c) \scalebox{0.9}{
			$\begin{array}{ccc}
				C_1[0] = \begin{pmatrix}1&0\end{pmatrix}&
				C_2[0] = \begin{pmatrix}\color{red}1&0&0\\0&\color{orange}1&0\end{pmatrix}&
				C_3[0] = \begin{pmatrix}0\\1\\1\end{pmatrix}\\
				C_1[1] = \begin{pmatrix}0&1\end{pmatrix}&
				C_2[1] = \begin{pmatrix}0&0&\color{blue}1\\0&0&\color{purple}1\end{pmatrix}&
				C_3[1] = \begin{pmatrix}1\\0\\1\end{pmatrix}\\
			\end{array}$}
		\caption{a) A BDD on 3 variables. b) For information, the truth table of the corresponding boolean function. c) The corresponding tensor train. for ease of understanding, we have grayed out the links to $\bot$ which are ignored, and coloured every edge between the $x_2$ nodes and the $x_3$ nodes, each corresponding to a $1$ in the core $C_2$. }
		\label{fig:BDD-to-TT}
	\end{figure}
	
	It is clear now that all matrices in the cores obtained directly from a BDD are $1$-row-sparse, for each node has at most $1$ outgoing $0$-edge, and at most $1$ outgoing $1$-edge. While TTs can be exponentially more succinct than BDDs, BDDs can at best be polynomially more succinct than TTs, thanks to the BDD to TT translation that we just described \cite{Onaka2025Tensor}.
	
	By extension, Weighted Binary Decision Diagrams (WBDDs) \cite{Miller2006QMDD,Hong2022TDD} are sparse special cases of tensor trains over the corresponding field. Notice that just like for variable ordering in BDDs~\cite{Bollig1996Improving,Miller2007Sifting}, core ordering in tensor trains may dramatically change the size of the data structure~\cite{Dupuy2025TTln,Tichavsky2025Order}.
	
	Since arbitrary matrices can be seen as biadjacency matrices for associated bipartite graph, we can still have a DAG-based graphical interpretation of arbitrary tensor trains. In this case, every non-terminal node can have arbitrarily many outgoing $0$-edges et $1$-edges, and each node can be seen as encoding a vector (or tensor), built by concatenation of linear combinations of the vectors encoded by the nodes in the next group (the children nodes), starting at the $\top$ node interpreted as $\begin{pmatrix}1\end{pmatrix}$. While in BDDs (resp.~WBDDs), nodes can be merged when they represent equal (resp.~colinear) vectors, the reduction defined below can be interpreted as: whenever a node represents a vector that is a linear combination of the others, it can be removed (and connections split amongst the others).
	
	\subsection{Graphical representation}
	
	Tensor trains, also called Matrix Product States (MPS) \cite{Perez2006mps}, are a special kind of a more general data structure called tensor network \cite{Penrose1971Applications,Bridgeman2017}. As such, they can be conveniently represented graphically. To keep things completely formal, we will work with string diagrams representing the strict compact-close symmetric monoidal category $\cat{Mat}_F$ whose objects are tuples of natural numbers and morphisms $T:F^{n_1}\otimes...\otimes F^{n_k}\to F^{m_1}\otimes...\otimes F^{m_\ell}$ are linear maps from $F^{n_1}\otimes...\otimes F^{n_k}$ to $F^{m_1}\otimes...\otimes F^{m_\ell}$ \cite{kissinger2012pictures,MacLane1978}.
	
	The tensor product of two vector spaces $F^n\otimes F^m$ is another vector space, that we can define as follows: Each space $F^{n}$ has a canonical orthonormal basis $\mathcal B_n = \{e_i\}_{0\leq i < n}$ where $e_i$ is the length-$n$ vector with entry $1$ at index $i$ and $0$ everywhere else. In that case, $F^n\otimes F^m$ is the vector space whose canonical basis is $\{e_i\otimes e_j\}_{\substack{0\leq i< n\\0\leq j < m}}$. This generalises to multiple tensors of the base vector spaces $F^n$. Since the category is considered strict, we consider the associativity of $(-\otimes-)$ as an equality. If $T:F^{n_1}\otimes...\otimes F^{n_k}\to F^{m_1}\otimes...\otimes F^{m_\ell}$ and $S:F^{p_1}\otimes...\otimes F^{p_i}\to F^{q_1}\otimes...\otimes F^{q_j}$ are linear maps, $T\otimes S$ is a linear map, defined on the basis elements as:
	\begin{align*}
		T\otimes S : F^{n_1}\otimes...\otimes F^{n_k}\otimes F^{p_1}\otimes...\otimes F^{p_i} & \to F^{m_1}\otimes...\otimes F^{m_\ell}\otimes F^{q_1}\otimes...\otimes F^{q_j}\\
		e_{r_1}\otimes ... \otimes e_{r_k}\otimes e_{s_1}\otimes ... \otimes e_{s_i} &\mapsto T(e_{r_1}\otimes ... \otimes e_{r_k})\otimes S(e_{s_1}\otimes ... \otimes e_{s_i})
	\end{align*}
	The tensor product interacts with the categorical composition $(-\circ-)$ according to the bifunctorial law, as follows: if $S_i$ and $T_i$ can be composed as $S_i\circ T_i$, for $i\in\{0,1\}$, then $(S_0\otimes S_1)\circ(T_0\otimes T_1) = (S_0\circ T_0)\otimes (S_1\circ T_1)$.
	
	There is an isomorphism $\iota_{n,m}:F^n\otimes F^m\to F^{nm}$, defined on the canonical basis as $e_i\otimes e_j\mapsto e_{mi+j}$. The isomorphism $\iota_{n,m}$ is crucial to tensor manipulation, it allows us to \emph{reshape} tensors to our will. One can easily check that $\iota_{nm,p}\circ(\iota_{n,m}\otimes I_p) = \iota_{n,mp}\circ(I_n\otimes\iota_{m,p})$, which allows us to unambiguously define a generalised isomorphism $\iota_{n_1,...,n_k}:F^{n_1}\otimes...\otimes F^{n_k}\to F^{n_1...n_k}$.
	
	The space $F^1=F$ deserves some additional attention: it is the tensor unit ($F\otimes F^n\cong F^n\cong F^n\otimes F$), a property captured by $\iota_{1,n}$ and $\iota_{n,1}$. This allows us to see a size-$n$ column vector as an $F\to F^n$ morphism, and a size-$n$ row vector as an $F^n\to F$ morphism.
	
	The other special morphisms in the category are the ones that make it compact-close and symmetric:
	\begin{itemize}
		\item the symmetry $\begin{aligned}[t]
			\sigma_{n,m}: F^n\otimes F^m &\to F^m\otimes F^n\\
			 e_i\otimes e_j &\mapsto e_j\otimes e_i
		\end{aligned}$
		\item the compact structure $\begin{aligned}[t]
			\eta_{n}: F &\to F^n\otimes F^n\\
			x &\mapsto x\sum_{i=0}^{n-1}e_i\otimes e_i
			\end{aligned}\quad \text{and}\quad
			\begin{aligned}[t]
			\epsilon_{n}: F^n\otimes F^n &\to F\\
			e_i\otimes e_j &\mapsto \braket{e_i}{e_j}
			\end{aligned}$
	\end{itemize}
	
	This category comes with a few additional axioms, that, for simplicity, we will express in graphical form in the following.
	
	This category is represented with string diagrams using the following notations:
	\def\fig{generators}
	\begin{itemize}
		\item Identity $I_{n}:F^n\to F^n$:\qquad $\input{./figures/\fig/\fig_14.tikz}$
		\item Arbitrary tensor $T : F^{n_1}\otimes ... \otimes F^{n_k} \to F^{m_1}\otimes ... \otimes F^{m_\ell}$:\qquad $\input{./figures/\fig/\fig_03.tikz}$
		\item Symmetry $\sigma_{n,m}:F^n\otimes F^m \to F^m\otimes F^n$:\qquad $\input{./figures/\fig/\fig_07.tikz}$
		\item Compact structure: \quad $\eta_n:F\to F^n\otimes F^n:\quad\input{./figures/\fig/\fig_05.tikz}$\quad and \quad $\epsilon_{n}:F^n\otimes F^n\to F:\quad\input{./figures/\fig/\fig_06.tikz}$
		\item Isomorphism $\iota_{n_1,...,n_k}:F^{n_1}\otimes ... \otimes F^{n_k}\to F^{n_1...n_k} $:\qquad $\input{./figures/\fig/\fig_04.tikz}$
		\item Compositions:\qquad $\input{./figures/\fig/\fig_10.tikz}\eq{}\input{./figures/\fig/\fig_11.tikz}$\qquad $\input{./figures/\fig/\fig_12.tikz}\eq{}\input{./figures/\fig/\fig_13.tikz}$
	\end{itemize}
	In the following (and already in the compositions above), we way drop the wire annotations to avoid clutter. In the special case of $I_1$ the identity on $F$, we may represent it with a dashed line.
	As particular cases of tensors, a matrix $M$, a column vector $v$ and a row vector $v^\intercal$ are represented respectively as follows: $\begin{tikzpicture}
	\begin{pgfonlayer}{nodelayer}
		\node [style=none] (230)  at (-2.5, 1.562) {};
		\node [style=none] (231)  at (-2.5, 1.062) {};
		\node [style=none] (232)  at (-3.0, 1.062) {};
		\node [style=none] (233)  at (-3.0, 1.562) {};
		\node [style=none] (234)  at (-2.75, 1.312) {$C_1$};
		\node [style=none] (237)  at (-2.0, 0.688) {};
		\node [style=none] (238)  at (-3.25, 0.688) {};
		\node [style=none] (239)  at (-1.5, 1.312) {};
		\node [style=none] (240)  at (-1.5, 0.562) {};
		\node [style=none] (241)  at (-2.0, 0.562) {};
		\node [style=none] (242)  at (-2.0, 1.312) {};
		\node [style=none] (243)  at (-1.75, 0.938) {$C_2$};
		\node [style=none] (245)  at (-3.25, 1.312) {};
		\node [style=none] (249)  at (-2.5, 1.312) {};
		\node [style=none] (250)  at (-2.0, 1.188) {};
		\node [style=none] (251)  at (-1.5, 0.938) {};
		\node [style=none] (252)  at (-1.25, 0.938) {};
		\node [style=none, rotate=-30] (253)  at (-1.0, 0.688) {...};
		\node [style=none] (254)  at (-0.5, -0.062) {};
		\node [style=none] (255)  at (-3.25, -0.062) {};
		\node [style=none] (256)  at (-0.0, 0.562) {};
		\node [style=none] (257)  at (-0.0, -0.188) {};
		\node [style=none] (258)  at (-0.5, -0.188) {};
		\node [style=none] (259)  at (-0.5, 0.562) {};
		\node [style=none] (260)  at (-0.25, 0.188) {$C_{k}$};
		\node [style=none] (261)  at (-0.75, 0.438) {};
		\node [style=none] (262)  at (-0.5, 0.438) {};
		\node [style=none] (263)  at (0.5, -0.438) {};
		\node [style=none] (264)  at (-4.0, -0.438) {};
		\node [style=none] (265)  at (1.25, 0.188) {};
		\node [style=none] (266)  at (1.25, -0.562) {};
		\node [style=none] (267)  at (0.5, -0.562) {};
		\node [style=none] (268)  at (0.5, 0.188) {};
		\node [style=none] (269)  at (0.875, -0.188) {$C_{k+1}$};
		\node [style=none] (271)  at (-0.0, 0.188) {};
		\node [style=none] (272)  at (0.5, 0.062) {};
		\node [style=none] (274)  at (-3.0, 1.312) {};
		\node [style=none, rotate=90] (275)  at (-2.5, 0.188) {...};
		\node [style=none] (281)  at (1.5, -0.688) {};
		\node [style=none] (283)  at (2.25, -0.062) {};
		\node [style=none] (284)  at (2.25, -0.812) {};
		\node [style=none] (285)  at (1.5, -0.812) {};
		\node [style=none] (286)  at (1.5, -0.062) {};
		\node [style=none] (287)  at (1.875, -0.438) {$C_{k+2}$};
		\node [style=none] (289)  at (1.25, -0.188) {};
		\node [style=none] (290)  at (1.5, -0.188) {};
		\node [style=none] (291)  at (2.25, -0.438) {};
		\node [style=none] (292)  at (2.5, -0.438) {};
		\node [style=none, rotate=-30] (293)  at (2.75, -0.562) {...};
		\node [style=none] (303)  at (3.25, -1.188) {};
		\node [style=none] (304)  at (1.5, -1.188) {};
		\node [style=none] (305)  at (3.75, -0.562) {};
		\node [style=none] (306)  at (3.75, -1.312) {};
		\node [style=none] (307)  at (3.25, -1.312) {};
		\node [style=none] (308)  at (3.25, -0.562) {};
		\node [style=none] (309)  at (3.5, -0.938) {$C_d$};
		\node [style=none] (310)  at (3.0, -0.688) {};
		\node [style=none] (311)  at (3.25, -0.688) {};
		\node [style=none, rotate=90] (313)  at (1.4, -0.938) {...};
		\node [style=none] (316)  at (-3.75, 1.562) {};
		\node [style=none] (317)  at (-3.75, 1.062) {};
		\node [style=none] (318)  at (-3.25, 1.562) {};
		\node [style=none] (319)  at (-3.25, 1.062) {};
		\node [style=none] (320)  at (-3.5, 1.312) {$e_{\ell_1}$};
		\node [style=none] (321)  at (-3.75, 0.188) {};
		\node [style=none] (322)  at (-3.75, -0.312) {};
		\node [style=none] (323)  at (-3.25, 0.188) {};
		\node [style=none] (324)  at (-3.25, -0.312) {};
		\node [style=none] (325)  at (-3.5, -0.062) {$e_{\ell_k}$};
		\node [style=none] (327)  at (-3.75, 0.938) {};
		\node [style=none] (328)  at (-3.75, 0.438) {};
		\node [style=none] (329)  at (-3.25, 0.938) {};
		\node [style=none] (330)  at (-3.25, 0.438) {};
		\node [style=none] (331)  at (-3.5, 0.688) {$e_{\ell_2}$};
		\node [style=tensor, shape border rotate=180] (332)  at (1.125, -0.938) {};
		\node [style=none] (333)  at (1.0, -0.938) {};
		\node [style=none] (334)  at (1.0, -1.562) {};
		\node [style=none] (335)  at (4.0, -1.562) {};
	\end{pgfonlayer}
	\begin{pgfonlayer}{edgelayer}
		\draw (230.center) to (231.center);
		\draw (231.center) to (232.center);
		\draw (232.center) to (233.center);
		\draw (233.center) to (230.center);
		\draw [in=180, out=0] (238.center) to (237.center);
		\draw (239.center) to (240.center);
		\draw (240.center) to (241.center);
		\draw (241.center) to (242.center);
		\draw (242.center) to (239.center);
		\draw (245.center) to (274.center);
		\draw [in=360, out=180] (250.center) to (249.center);
		\draw (252.center) to (251.center);
		\draw [in=180, out=0] (255.center) to (254.center);
		\draw (256.center) to (257.center);
		\draw (257.center) to (258.center);
		\draw (258.center) to (259.center);
		\draw (259.center) to (256.center);
		\draw (262.center) to (261.center);
		\draw [in=360, out=180] (263.center) to (264.center);
		\draw (265.center) to (266.center);
		\draw (266.center) to (267.center);
		\draw (267.center) to (268.center);
		\draw (268.center) to (265.center);
		\draw [in=360, out=180] (272.center) to (271.center);
		\draw (283.center) to (284.center);
		\draw (284.center) to (285.center);
		\draw (285.center) to (286.center);
		\draw (286.center) to (283.center);
		\draw (290.center) to (289.center);
		\draw (292.center) to (291.center);
		\draw [in=0, out=-180] (303.center) to (304.center);
		\draw (305.center) to (306.center);
		\draw (306.center) to (307.center);
		\draw (307.center) to (308.center);
		\draw (308.center) to (305.center);
		\draw (311.center) to (310.center);
		\draw (316.center) to (317.center);
		\draw (317.center) to (319.center);
		\draw (318.center) to (316.center);
		\draw (319.center) to (318.center);
		\draw (321.center) to (322.center);
		\draw (322.center) to (324.center);
		\draw (323.center) to (321.center);
		\draw (324.center) to (323.center);
		\draw (327.center) to (328.center);
		\draw (328.center) to (330.center);
		\draw (329.center) to (327.center);
		\draw (330.center) to (329.center);
		\draw (332) to (333.center);
		\draw [in=180, out=-45] (332) to (304.center);
		\draw [in=-180, out=45] (332) to (281.center);
		\draw [bend right=90, looseness=1.25] (333.center) to (334.center);
		\draw (335.center) to (334.center);
	\end{pgfonlayer}
\end{tikzpicture}$, $\begin{tikzpicture}
	\begin{pgfonlayer}{nodelayer}
		\node [style=none] (203)  at (0.413, 0.188) {};
		\node [style=none] (204)  at (0.413, -0.312) {};
		\node [style=none] (205)  at (1.212, -0.312) {};
		\node [style=none] (206)  at (1.212, 0.188) {};
		\node [style=none] (207)  at (0.812, -0.062) {$U_{k+1}$};
		\node [style=none] (208)  at (1.212, -0.062) {};
		\node [style=none] (209)  at (1.462, -0.062) {};
		\node [style=none] (210)  at (0.163, 0.312) {};
		\node [style=none] (211)  at (0.163, -0.438) {};
		\node [style=none] (212)  at (-0.587, -0.438) {};
		\node [style=none] (213)  at (-0.587, 0.312) {};
		\node [style=none] (214)  at (-0.212, -0.062) {$C_{k+1}$};
		\node [style=none] (215)  at (0.163, -0.062) {};
		\node [style=none] (216)  at (0.413, -0.062) {};
		\node [style=none] (217)  at (-0.587, -0.312) {};
		\node [style=none] (219)  at (-0.587, 0.188) {};
		\node [style=none] (222)  at (-1.462, -0.312) {};
		\node [style=none] (224)  at (-0.737, 0.438) {};
		\node [style=none] (225)  at (-0.737, -0.062) {};
		\node [style=none] (226)  at (-1.287, -0.062) {};
		\node [style=none] (227)  at (-1.287, 0.438) {};
		\node [style=none] (228)  at (-1.012, 0.188) {$v_{k}$};
		\node [style=none] (229)  at (-0.737, 0.188) {};
	\end{pgfonlayer}
	\begin{pgfonlayer}{edgelayer}
		\draw (203.center) to (204.center);
		\draw (204.center) to (205.center);
		\draw (205.center) to (206.center);
		\draw (206.center) to (203.center);
		\draw (209.center) to (208.center);
		\draw (210.center) to (211.center);
		\draw (211.center) to (212.center);
		\draw (212.center) to (213.center);
		\draw (213.center) to (210.center);
		\draw (216.center) to (215.center);
		\draw (222.center) to (217.center);
		\draw (224.center) to (225.center);
		\draw (225.center) to (226.center);
		\draw (226.center) to (227.center);
		\draw (227.center) to (224.center);
		\draw [in=180, out=0] (229.center) to (219.center);
	\end{pgfonlayer}
\end{tikzpicture}$ and $\input{./figures/\fig/\fig_02.tikz}$.
	The compact structure allows us to define the transpose of tensors, as:
	$\input{./figures/\fig/\fig_15.tikz}\eq{}\input{./figures/\fig/\fig_16.tikz}$\\
	Interestingly, the inverse of the isomorphism $\iota$ is its transpose. We hence may define it as follows:	$\input{./figures/\fig/\fig_08.tikz}~~:=~~\input{./figures/\fig/\fig_09.tikz}$. This allows us to define the row vectorised form of an arbitrary tensor:
	\tikzfig{row-vectorisation-2}.
	This also allows us to define the Kronecker product of two matrices $M$ and $N$ can as a reshaping of their tensor product: \def\fig{TN-tensor-1}$$	
	
	The graphical notation makes the bifunctorial law a tautology, as each side of the equation is represented in the exact same way as string diagrams. The additional axioms of compact-close symmetric monoidal categories (satisfied by $\cat{Mat}_F$) are the following, with $T$ an arbitrary tensor:
	\begin{align*}
		\tikzfig{TN-identity}&\qquad\qquad
		\tikzfig{TN-swap-naturality}\qquad\qquad
		\tikzfig{TN-symmetry}\\[1em]
		\tikzfig{TN-swap-cap}&\qquad\qquad
		\tikzfig{TN-snake}
	\end{align*}
	A crucial result on string diagrams for compact-close symmetric monoidal categories, is that one can deform any diagram at will, using the above axioms:
	\begin{theorem}[\cite{Selinger2010Survey}]
		The isomorphism of diagrams follows from the axioms of compact-close symmetric monoidal categories. Two isomorphic diagrams are hence equal in such a theory.
	\end{theorem}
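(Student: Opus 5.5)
This is the coherence theorem for compact closed categories, cited from \cite{Selinger2010Survey}. The plan follows the standard route through the free compact closed category on a signature (Kelly--Laplaza, as surveyed by Selinger). It has three steps: formalise ``diagram'', show the axioms preserve the interpretation, and show they identify isomorphic diagrams.

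First, I would formalise diagrams. A diagram is a finite graph whose vertices are the boxes (tensors $T$) together with ordered input and output ports. Its edges are wires connecting ports, wires that only connect to the diagram boundary, and closed loops. Two diagrams are \emph{isomorphic} if there is a bijection of boxes and wires that preserves box labels, port attachments and the boundary ordering. I would build a term language from the generators listed above: identities, symmetries $\sigma$, $\eta$, $\epsilon$, arbitrary tensors $T$, $\circ$ and $\otimes$. Each term has an obvious associated diagram. The statement to prove is then: if two terms have isomorphic diagrams, they are equal modulo the axioms shown (identity, swap naturality, symmetry, swap--cap, snake) together with the strict monoidal and bifunctoriality laws.

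Second, I would check soundness, i.e.\ that each axiom holds in $\cat{Mat}_F$. This is a direct computation on basis vectors. For example, the snake equation is $(\epsilon_n\otimes I_n)\circ(I_n\otimes\eta_n)(e_j)=\sum_i \braket{e_j}{e_i} e_i = e_j$. So equality of diagrams up to isomorphism is well defined on interpreted tensors once the combinatorial part is established.

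Third, I would prove completeness by reduction to a normal form. Every term is rewritten, using bifunctoriality and symmetric monoidal coherence (Mac Lane's theorem for the symmetric part), into a canonical shape: first a layer of $\eta$'s and permutations, then all boxes placed in parallel, then a permutation and a layer of $\epsilon$'s. In this shape the term is determined up to the axioms by three pieces of data: the multiset of boxes, the matching on ports that the caps and cups induce (which is exactly the wiring of the diagram), and the number of closed loops. Snake equations remove zig-zags, so that each wire corresponds to a single matching pair. Swap naturality and swap--cap let permutations slide past boxes and caps, making the choice of box ordering and port ordering irrelevant. A diagram isomorphism only reorders these data, so two terms with isomorphic diagrams have normal forms connected by these permutation moves, hence are equal.

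The main obstacle is this completeness step. One has to show that the normal form is reachable from any term, and that any two normal forms for isomorphic diagrams are interderivable. The subtle points are wires that pass through several cups and caps, which need repeated snake eliminations, and the treatment of closed loops (traces of identities), which must be kept as scalar factors rather than eliminated. Given that the paper cites \cite{Selinger2010Survey}, I would ultimately defer to the Kelly--Laplaza coherence theorem for these details.
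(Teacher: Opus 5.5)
The paper does not prove this itself but imports it from Selinger's survey (the Kelly--Laplaza coherence theorem), so your plan is essentially the paper's route: you sketch the standard normal-form argument, then defer to that literature for the combinatorial details. Your step~2 (validity of the axioms in $\cat{Mat}_F$) is not part of the statement, but it is what licenses applying the theorem to $\cat{Mat}_F$; beyond that, Kelly--Laplaza work with directed wires and $A^*\neq A$, whereas here objects are self-dual and wires undirected (hence the swap--cap axiom), and free loops should be recorded as a multiset of wire labels rather than a bare count, both of which your normal-form sketch easily accommodates.
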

	These axioms don't account for the isomorphism $\iota$, which is specific to the category $\cat{Mat}_F$. We hence introduce the following additional equations to axiomatise it:
	\begin{align*}
		\tikzfig{TN-assoc}\qquad\qquad
		\tikzfig{TN-tensor-2}
	\end{align*}
	It is fairly easy to check that these are indeed satisfied. We do not claim any form of completeness here, and simply presented the equations that are necessary for the following.
	
	Within this framework, we may redefine what a tensor train decomposition is. To simplify the presentation, we will work on states. Notice that thanks to the compact closure, tensor train decompositions for states naturally generalise to tensor train decompositions for arbitrary tensors.
	
	\begin{definition}
		A tensor train decomposition of $T:F\to F^{n_1}\otimes ... \otimes F^{n_d}$ is a decomposition of the form:
		\[\tikzfig{tensor-train}\]
	\end{definition}

	Notice that $C_1$ and $C_d$ are represented differently, because $r_0=r_d=1$. We could have represented them as $\tikzfig{C1}$ and $\tikzfig{Cd}$ respectively to keep them consistent with the rest, albeit at the cost of a more cluttered form.
	
	Before moving on, we want to emphasise the importance of this graphical framework for the following: it is completely formal, and it makes the handling of tensors, and all proofs that manipulate them, much easier and cleaner to present. Without them, proofs can quickly look like a mess with large amounts of indices to properly keep track of during reshapes.
	
	\subsection{Reduction}
	
	Tensor train decompositions are clearly not unique, since, for any matrix $A$ with a left inverse $A^{-1}$, we have:
	\begin{align*}
		\tikzfig{TT-decomp-not-unique}
	\end{align*}
	by pushing $A$ in $C_2$ and $A^{-1}$ in $C_1$.
	
	In order to reduce as much as possible the memory size required to represent a tensor by a tensor train, we want to find a decomposition that minimises the TT-ranks. One way to do this is to exploit the property above, using rank factorisations of the cores. Such a decomposition has been identified as the \emph{Hierarchical Singular Value Decomposition (HSVD)}. Importantly, such decomposition can be carried out on an existing tensor train, with a complexity that is polynomial in $d$, $r$ and $n$.
	
	While the SVD is numerically stable, provides the best low-rank approximations of arbitrary matrices, and is amenable to several optimisation techniques from numerical methods \cite{Baboulin2025Mixed}, the HSVD suffers from a few drawbacks:
	\begin{itemize}
		\item It is not unique, although several variations have been introduced to constrain the TT form \cite{Holtz2011Manifolds,Perez2006mps,Vidal2003Efficient}
		\item Singular Value Decompositions are only defined on $\mathbb R$ and $\mathbb C$, and can hence only be approximated in practice
		\item There is no known efficient way of computing the first (or really any) index that contains a value with amplitude $>\epsilon$ for a given threshold $\epsilon\geq0$
	\end{itemize}
	
	The rest of the paper is devoted to providing a new tensor train decomposition that is valid for any field-valued tensor, and reaches optimal TT-ranks in the exact computation case. This decomposition is based on a simpler matrix factorisation than the SVD: the row-echelon form, allowed by mere Gauss-Jordan elimination. As we will see, the tensor train decomposition we obtain is unique, and can be reached from any tensor train in time polynomial in its size.
	
	\section{LDPU Decomposition}
	\label{sec:LDPU}
	
	In the following, we will use the usual notion of row-echelon form (REF) of a matrix, whereby for every non-zero row, its leading index is (strictly) larger than the leading indices of the non-zero rows above\footnote{REF matrices are usually described as having their zero rows at the bottom. We do not impose this here, for almost everywhere in the following, the zero-rows are going to be removed anyway.}. 
	The row-echelon form process may easily give a rank decomposition of the initial matrix, by removing the zero rows in the echelon form, and the corresponding columns in the transformation matrix that yields the matrix in REF. Since in such a matrix, the rows are linearly independent, we say that such a REF matrix is full row-rank. If the leading coefficients of all the non-zero rows of a matrix in REF are $1$, we say that it is in \emph{unit} REF. Obviously, we say that a matrix is in (unit) column-echelon form (CEF) if its transpose is in (unit) row-echelon form.
	
	\subsection{Algorithm and uniqueness}
	
	A standard way to compute a REF out of a matrix is to compute a PLU decomposition~\cite{Golub2013}. This one is however not unique, hence we will use a much less common variation, called LDPU decomposition~(which can be partially found in~\cite{Carrell2017} for square matrices). To do so, we need an extra notion on column-echelon form matrices:
	\begin{definition}
		Let $L\in F^{n\times r}$ be a rank-$r$ matrix in column-echelon form, and $P\in F^{r\times r}$ be the matrix representing the permutation $\sigma$. Let $(r_k)_{0\leq k<r}$ be the sorted sequence of $L$'s pivots' row indices. We say that $L$ is $P$-conditioned if:
		\[\forall i,j\in\{0,...,r-1\},~\sigma^{-1}(j) > \sigma^{-1}(i) \implies L[r_i,j]=0\]
	\end{definition}
	Notice that since $L$ is in column-echelon form, we already have $i<j\implies L[r_i,j]=0$. Being $P$-conditioned hence only adds further constraints when $i>j$ and $\sigma^{-1}(j) > \sigma^{-1}(i)$, i.e.~when $\sigma^{-1}$ inverses $i$ and $j$.
	\begin{definition}
		Let $A\in F^{n\times m}$ be a matrix of rank $r$. An \emph{LDPU decomposition} of $A$ is of the form $A = LDPU$ where:
		\begin{itemize}
			\item $U\in F^{r\times m}$ is in unit row-echelon form
			\item $P\in F^{r\times r}$ is a permutation matrix
			\item $D\in F^{r\times r}$ is diagonal
			\item $L\in F^{n\times r}$ is in unit column-echelon form and $P$-conditioned
		\end{itemize}
	\end{definition}
	Notice that since $r$ is the rank of $A$, $U$ has to be full row-rank (i.e.~the rows of $U$ form a linearly independent family of row vectors), and similarly $L$ has to be full column-rank. Notice also that when working in $\mathbb F_2$, $D$ is merely the identity matrix, and can hence be ignored. 
	
	\begin{example}
		\renewcommand{\arraystretch}{0.85}
		\setlength{\arraycolsep}{2pt}
		Here is an LDPU of the matrix on the left, over $\mathbb Q$:
		\begin{align*}
			\begin{pmatrix}
				0 & 0 & 1 & 2 \\
				1 & 2 & 3 & 4 \\
				1 & -1 & 1 & -1 \\
				2 & 1 & 4 & 3
			\end{pmatrix}
			= \begin{pmatrix}
				1 & 0 & 0 \\
				0 & 1 & 0 \\
				0 & 1 & 1 \\
				0 & 2 & 1
			\end{pmatrix}
			\begin{pmatrix}
				1 & 0 & 0 \\
				0 & 1 & 0 \\
				0 & 0 & -3
			\end{pmatrix}
			\begin{pmatrix}
				0 & 0 & 1 \\
				1 & 0 & 0 \\
				0 & 1 & 0
			\end{pmatrix}
			\begin{pmatrix}
				1 & 2 & 3 & 4 \\
				0 & 1 & \frac{2}{3} & \frac{5}{3} \\
				0 & 0 & 1 & 2
			\end{pmatrix}
			=LDPU
		\end{align*}
		With $\sigma$ the permutation represented by $P$, we see that $\sigma^{-1}(0)=2$, $\sigma^{-1}(1)=0$ and $\sigma^{-1}(2)=1$. We have $\sigma^{-1}(0)>\sigma^{-1}(1)$ hence $L[r_1,0] = L[1,0] = 0$. Similarly, $L[r_2,0]=L[2,0] =0$. As a result, $L$ is indeed $P$-conditioned.
	\end{example}
	
	Obviously, an L\textbf{DP}U decomposition directly provides an L\textbf{PD}U decomposition, by pushing the diagonal matrix through the permutation matrix.
	
	It is fairly easy to get an LDPU decomposition of any matrix, by adapting the standard Gaussian elimination algorithm. First we make sure that all the rows of $A$ are either null or have unique pivots, by performing as few operations on the matrix as possible: when considering a row, we add another previous row to it only if its leading index is a previously identified pivot. After setting all leading coefficients to $1$, we get a matrix which is in REF up to a permutation of the rows, which will constitute the permutation matrix $P$. Since we only add given rows to rows below them, the obtained transformation matrix $L$ is lower triangular. Moreover, a row is never added to a row with a smaller leading index. This is captured by $L$ being $P$-conditioned. We can then make it unit by extracting the diagonal coefficients, that we store in the diagonal matrix $D$. Finally, for every zero row in the REF matrix, we remove it, as well as the corresponding row/column in the other three matrices. The obtained algorithm is more formally expressed in \Cref{alg:LDPU}, where $\mathbf{0}_{n\times m}$ represents the zero $n\times m$ matrix.

	Notice that if the rank of the matrix is $0$ (i.e.~if the matrix is null), the algorithm still produces a result, with $L$ having no column, $U$ having no row, and $D$ and $P$ being $0\times0$ matrices. 
	The existence of this algorithm ensures the existence of an LDPU decomposition for any matrix. This decomposition also turns out to be unique:
	\begin{proposition}
		\label{prop:LDPU-uniqueness}
		For any matrix $A\in F^{n\times m}$, the LDPU decomposition of $A$ exists and is unique. It can be computed in time $\mathcal O(nm\min(n,m))$.
	\end{proposition}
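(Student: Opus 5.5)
Existence and complexity come from \Cref{alg:LDPU}; uniqueness needs a separate argument. The plan has three steps.

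\emph{Existence.} I would take as loop invariant that after processing rows $0,\dots,k-1$ we have $A[0..k-1] = L' R$. Here $L'$ is lower unitriangular, $R$ has its nonzero rows with pairwise distinct leading indices, and each added row has a pivot smaller than the leading index of the row it is added to. At the end:
\begin{itemize}
\item remove the zero rows of $R$ together with the corresponding columns of $L'$;
\item scale each remaining row to leading coefficient $1$, moving the scalars into $D$;
\item sort the rows by leading index, which defines $P$ and gives $U$ in unit REF.
\end{itemize}
The entries $L[r_i,j]$ with $j<i$ that would violate $P$-conditioning correspond to adding row $j$ to row $i$ when row $j$'s pivot is larger. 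The elimination rule forbids exactly this. Column-echelon form of $L$ with pivots at the rows $r_i$ follows from lower unitriangularity restricted to the kept rows.

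\emph{Complexity.} Each of the $n$ rows is reduced against at most $\min(n,m)$ pivots, and each reduction costs $O(m)$, giving $O(nm\min(n,m))$. Applying the same argument to $A^\intercal$, or noting that only $\mathrm{rank}\le\min(n,m)$ pivots exist, gives the symmetric bound.

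\emph{Uniqueness.} This is the main obstacle. I would first show that $U$ is determined by $A$. Since $L$ and $DP$ are injective, the row space of $U$ equals the row space of $A$. A unit REF basis of a given row space is not unique in general, since it is only unique once reduced. So the plan is instead to determine the pivot rows $r_i$ of $L$ intrinsically: $r_i$ is the smallest row index $t$ such that $\mathrm{rank}(A[0..t]) = i+1$. This holds because $L$ is in CEF, so $\mathrm{rank}(L[0..t])$ jumps exactly at the $r_i$, and $DPU$ has full row rank. The row $A[r_i]$ equals $L[r_i]DPU$, and the CEF structure forces $L[r_i,j]=0$ for $j>i$ and $L[r_i,i]=1$. $P$-conditioning further kills every $j<i$ with $\sigma^{-1}(j)>\sigma^{-1}(i)$. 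Hence $A[r_i]$ is $D_{ii}$ times row $\sigma^{-1}(i)$ of $U$, plus multiples of rows of $U$ with strictly smaller index, i.e.\ rows with smaller leading index.

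I would then proceed by induction on leading index. Let $V_i=\mathrm{span}\{A[r_0],\dots,A[r_i]\}$. The leading index of $A[r_i]$ is the leading index of $U[\sigma^{-1}(i)]$, which determines $\sigma$ and hence $P$. This is where the added conditions must make everything canonical. Concretely, for uniqueness I would compare two decompositions $LDPU=L'D'P'U'$. Both share the same $r_i$ and the same $\sigma$, because both are determined by leading indices of the rows $A[r_i]$. Given this, I would show $U$ and $U'$ have the same rows by induction on leading-index order, with the triangular structures fixing the coefficients. If the unreduced $U$ turns out not to be forced, I would revisit the definition and check whether $L$-rows away from the $r_i$ impose the remaining constraints. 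Once $U$ and $P$ are fixed, $D$ follows from the coefficients $D_{ii}$ above. $L$ then follows from $A=L(DPU)$, since $DPU$ has full row rank and so has a right inverse.
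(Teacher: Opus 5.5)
\textbf{There is a genuine gap in how you pin down $\sigma$.} Your existence and complexity arguments are fine. So are the intrinsic characterisation of the pivot rows $r_i$ and the final recovery of $L$ from $A=L(DPU)$.

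You assert that the leading index of $A[r_i]$ equals that of $U[\sigma^{-1}(i)]$. This contradicts your preceding sentence. The extra terms are multiples of rows of $U$ with \emph{smaller} leading index, so whenever one of them is present, it, not $U[\sigma^{-1}(i)]$, dictates the leading index of $A[r_i]$. The paper's example shows this concretely. There $A[r_2]=A[2]=(1,-1,1,-1)=U[0]-3U[1]$ has leading index $0$, exactly like $A[r_1]=(1,2,3,4)$, whereas $U[\sigma^{-1}(2)]=U[1]$ has leading index $1$. The leading indices of the $A[r_i]$ need not even be distinct, so they cannot determine $\sigma$. Your claim that two decompositions share the same $\sigma$ is therefore unsupported.

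What remains is ``the triangular structures fixing the coefficients'', plus the hedge about revisiting the definition. That is precisely the substance of uniqueness: showing that $P$-conditioning forces the entries $L[r_i,j]$ for $j<i$ and the non-reduced entries of $U$. As written, this is not proved.

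\textbf{How to repair it.} Induct on $i$ in the order $r_0<r_1<\cdots$ of the pivot rows, as the paper does, determining $\sigma^{-1}(i)$, $D_{ii}$, the coefficients and the row $U[\sigma^{-1}(i)]$ together.
Assume the unit rows $Y_j:=U[\sigma^{-1}(j)]=U'[\sigma'^{-1}(j)]$ agree for $j<i$. Write $A[r_i]=D_{ii}X+\sum_{j\in S}c_jY_j=D'_{ii}X'+\sum_{j\in S'}c'_jY_j$. Here $X,X'$ are unit rows with leading indices $p,p'$ not among those of the $Y_j$. The set $S$ (resp.\ $S'$) consists of the $j<i$ with $\operatorname{lead}(Y_j)<p$ (resp.\ $<p'$).

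Suppose $p<p'$. Then $D_{ii}X-D'_{ii}X'$ has leading index exactly $p$, since $D_{ii}\neq0$. But it lies in the span of the $Y_j$, and every nonzero element of that span has its leading index among the $\operatorname{lead}(Y_j)$, none of which is $p$. This is a contradiction, so $p=p'$. Hence $S=S'$, and $\sigma^{-1}(i)=\sigma'^{-1}(i)$, because the pivot columns of $U$ and $U'$ are both those of the row space of $A$.

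Now $D_{ii}X-D'_{ii}X'$ vanishes before position $p$, whereas any nonzero combination of the $Y_j$ with $j\in S$ has leading index below $p$. So $c_j=c'_j$ for all $j\in S$. Then $D_{ii}X=D'_{ii}X'$ gives $D_{ii}=D'_{ii}$ and $X=X'$ by unitness.

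Intrinsically, $\operatorname{lead}(U[\sigma^{-1}(i)])$ is the unique pivot column of your $V_i$ that is not a pivot column of $V_{i-1}$. That, rather than the leading index of $A[r_i]$, is what determines $\sigma$.
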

	
	As usual, everything is expressed here as a matrix, to simplify the presentation, although in practice, some elements may be more compactly stored (in particular the diagonal and permutation matrices). We are not concerned here with the efficiency and implementation details of the algorithm, for we are more interested in the theoretical results that the decomposition allows.
	\begin{remark}
		This LDPU decomposition, and subsequently, the tensor train decomposition developed in this article, likely suffer from: i/ numerical instability when working with floating point arithmetic, and ii/ exponential bit complexity when working with exact representations of values in infinite fields. However, those two drawbacks disappear when working with finite fields such as $\mathbb F_2$.
	\end{remark}
	
	\subsection{Properties of row-echelon forms}
	
	In the following, we will need several results about matrices in row-echelon form, that we state here. The first state that the REF is preserved by Kronecker product:
	
	\begin{lemma}
		\label{lem:REF-tensor-id}
		Let $U_0$ and $U_1$ be two full row-rank matrices in (unit) row-echelon form. Then their Kronecker product $\tikzfig{REF-tensor}$ is in full row-rank (unit) row-echelon form.
	\end{lemma}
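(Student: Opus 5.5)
We identify the tikz figure with the Kronecker product $U_0\otimes U_1$ (via the reshaping $\iota$), and work directly with its entries: if $U_0\in F^{r_0\times m_0}$ and $U_1\in F^{r_1\times m_1}$, then $(U_0\otimes U_1)[r_1 i+k,\, m_1 j+\ell] = U_0[i,j]\,U_1[k,\ell]$. This follows from the definition $\iota_{n,m}(e_i\otimes e_j)=e_{mi+j}$, applied on both the row and column sides. The proof then reduces to locating the leading index of each row of the product and comparing these indices lexicographically.

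\textbf{Step 1: leading entries of each row.} Fix a row index $r_1 i + k$ and let $p_i$ be the leading index of row $i$ of $U_0$ and $q_k$ that of row $k$ of $U_1$. Both exist because the matrices are full row-rank, so no row is zero. The row $r_1 i+k$ of the product is the Kronecker product of the row vectors $U_0[i]$ and $U_1[k]$.
\begin{itemize}
\item If $j<p_i$, the entry at column $m_1 j+\ell$ vanishes because $U_0[i,j]=0$.
\item If $j=p_i$ and $\ell<q_k$, the entry vanishes because $U_1[k,\ell]=0$.
\item At column $m_1 p_i+q_k$ the entry is $U_0[i,p_i]\,U_1[k,q_k]\neq 0$, and it equals $1$ in the unit case.
\end{itemize}
Since $m_1 j+\ell$ with $0\le \ell<m_1$ orders columns lexicographically by $(j,\ell)$, the leading index of row $r_1 i+k$ is $m_1 p_i+q_k$. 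The leading coefficient is the product of the two leading coefficients, which gives the unit variant.

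\textbf{Step 2: strict increase.} Consider consecutive rows $r_1 i+k$ and $r_1 i+k+1$ of the product.
\begin{itemize}
\item If $k+1<r_1$, both rows share the same $i$, and $q_{k+1}>q_k$ holds by the REF of $U_1$.
\item Otherwise we pass from $(i,r_1-1)$ to $(i+1,0)$, and $p_{i+1}>p_i$ holds by the REF of $U_0$. Then $m_1 p_{i+1}+q_0\ge m_1(p_i+1) > m_1 p_i + q_{r_1-1}$, since $q_{r_1-1}<m_1$.
\end{itemize}
So the leading indices strictly increase and the product is in REF. Since every row is non-zero, the rows are linearly independent, and the product is full row-rank. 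Alternatively, $\operatorname{rank}(U_0\otimes U_1)=r_0r_1$, which equals its number of rows.

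\textbf{Main obstacle.} The mathematics is elementary. The only real care needed is to justify rigorously, within the string-diagram formalism, that the pictured composite of $U_0\otimes U_1$ with $\iota$ and its transpose has exactly the index formula above. I would prove this first by evaluating the diagram on basis vectors $e_j\otimes e_\ell$, using the definition of $\iota$ and the equations for reshaping tensor products given earlier. After that, the argument is the index comparison above.
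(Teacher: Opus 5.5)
Your proof is correct and follows the paper's route. The paper's proof simply identifies the diagram with the Kronecker product of $U_0$ and $U_1$ and asserts that this product is in full row-rank (unit) REF. That identification is definitional, since Section 2 defines the Kronecker product as exactly this $\iota$-reshaping, so your \emph{main obstacle} is not really an obstacle. Your lexicographic leading-index computation supplies the verification the paper leaves out.
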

	Notice that the identity matrices are full row-rank unit REF matrices, so the lemma applies as a particular case when either $U_0$ or $U_1$ is replaced by the identity. The next lemma states that the REF is also preserved by composition:
	\begin{lemma}
		\label{lem:REF-compo}
		The product of two matrices in row-echelon form is in row-echelon form. If the two matrices are unit, the result is also unit.
	\end{lemma}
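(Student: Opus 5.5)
The plan is a direct entrywise argument on the rows of the product. Let $A\in F^{n\times k}$ and $B\in F^{k\times m}$ be in row-echelon form. For each non-zero row $i$ of $A$, write $\ell_A(i)$ for its leading index, and use the same notation $\ell_B(j)$ for the non-zero rows of $B$. By definition of the product,
\[(AB)[i] \;=\; \sum_{j} A[i,j]\,B[j] \;=\; \sum_{j\geq \ell_A(i)} A[i,j]\,B[j].\]
So a zero row of $A$ gives a zero row of $AB$, and the only rows to study are the non-zero ones.

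The first key step is to identify the leading index of $(AB)[i]$. The non-zero rows of $B$ have pairwise distinct and strictly increasing leading indices. Hence, among the terms $A[i,j]B[j]$ with $A[i,j]\neq 0$ and $B[j]\neq 0$, exactly one attains the smallest leading index, namely the term with smallest such $j$. Call this index $j^*(i)$. No cancellation can occur at that column, so:
\begin{itemize}
\item $(AB)[i]$ is non-zero if and only if $j^*(i)$ exists;
\item in that case its leading index is $\ell_B(j^*(i))$;
\item its leading coefficient is $A[i,j^*(i)]\cdot B[j^*(i),\ell_B(j^*(i))]$.
\end{itemize}

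The second step is monotonicity. Take $i<i'$ with both rows of $AB$ non-zero. Since $\ell_B$ is strictly increasing on the non-zero rows of $B$, it suffices to show $j^*(i)<j^*(i')$. The natural case is when $B[\ell_A(i)]\neq 0$ for every non-zero row $i$ of $A$. This holds in particular whenever $B$ has no zero row, which is the situation in which these results are used (full row-rank REF matrices, as in \Cref{lem:REF-tensor-id}). Then $j^*(i)=\ell_A(i)$, and the REF property of $A$ gives $\ell_A(i)<\ell_A(i')$, as required. For the unit statement, the leading coefficient is $A[i,\ell_A(i)]\cdot B[\ell_A(i),\ell_B(\ell_A(i))]=1\cdot 1$.

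The main obstacle is the case where zero rows of $B$ sit at positions $\ell_A(i)$, so that $j^*(i)$ is pushed beyond $\ell_A(i)$. I first checked whether monotonicity survives here, and it does not. With $A=\begin{pmatrix}1&1\\0&1\end{pmatrix}$ and $B=\begin{pmatrix}0\\1\end{pmatrix}$, both in unit REF under the paper's convention, one gets $AB=\begin{pmatrix}1\\1\end{pmatrix}$. This has two non-zero rows with the same leading index, so it is not in REF.

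So the statement cannot be proven as literally worded, and I will not claim a general proof. The lemma holds under an additional hypothesis, and I would state it explicitly. The cleanest choice is that $B$ is full row-rank, which covers every use in the paper since zero rows are always removed. An equivalent condition is $B[\ell_A(i)]\neq 0$ for all non-zero rows $i$ of $A$. Under this hypothesis, the two steps above give the product in REF, and unit whenever $A$ and $B$ are unit.
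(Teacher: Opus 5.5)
Your counterexample is correct, and it exposes an error in the paper rather than a gap in your proposal. Under the paper's own convention (zero rows allowed anywhere), $A=\begin{pmatrix}1&1\\0&1\end{pmatrix}$ and $B=\begin{pmatrix}0\\1\end{pmatrix}$ are in unit REF, while $AB=\begin{pmatrix}1\\1\end{pmatrix}$ is not. So the lemma as worded is false.

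The paper's proof fails exactly on this example. It writes every REF matrix as $SU$, with $S$ an identity with missing rows and $U$ square upper triangular. Your $B$ has no such factorisation, since $S$ would have to be $2\times 1$. If $S_2$ is allowed zero rows instead, then $U_1S_2$ deletes columns of $U_1$ rather than inserting zero columns, which is precisely how $\begin{pmatrix}1\\1\end{pmatrix}$ arises.

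The paper's last step is also loose: $SU$ with $U$ merely upper triangular need not be in REF (take $S=I_2$, $U=\begin{pmatrix}0&1\\0&1\end{pmatrix}$). Once repaired, the factorisation argument is essentially sound only when both factors have full row rank, so that the triangular parts can be taken invertible.

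Your route computes directly the leading index $\ell_B(j^*(i))$ and the leading coefficient of each row of $AB$. It needs nothing beyond REF on $A$, isolates exactly what is required of $B$, and makes the unit claim immediate. Under your restriction it is a complete proof.

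Three corrections to your repair:
\begin{enumerate}
\item The condition $B[\ell_A(i)]\neq 0$ for all non-zero rows $i$ of $A$ is not equivalent to $B$ being full row-rank. It is weaker, and only sufficient.
\item A more natural fix is to require the zero rows of $B$ to lie below its non-zero rows. If $B$ has $\rho$ non-zero rows, your argument gives $j^*(i)=\ell_A(i)$ when $\ell_A(i)<\rho$, and $(AB)[i]=0$ otherwise. So the lemma holds verbatim under the textbook REF convention, with full row-rank $B$ as a special case.
\item Full row-rank $B$ does not cover every use in the paper. The proofs of \Cref{lem:REF-right-inverse} and \Cref{lem:unique-REF-factorisation} apply this lemma to $U^{-1}S^\intercal$ and to $U_2U_1^\diamond$, whose right factors generally have zero rows.
\end{enumerate}

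Your example is in fact an instance of the first of those uses. Take $U=\begin{pmatrix}1&-1\\0&1\end{pmatrix}$ and $S=\begin{pmatrix}0&1\end{pmatrix}$; then $U^{-1}=A$ and $S^\intercal=B$. So $\begin{pmatrix}1\\1\end{pmatrix}$ is a right inverse of the unit REF matrix $SU=\begin{pmatrix}0&1\end{pmatrix}$ that is not in REF, contradicting \Cref{lem:REF-right-inverse} as stated. The conclusion of \Cref{lem:unique-REF-factorisation} can still be recovered by applying your leading-index computation directly to $XU_1=U_2$, since $U_1$ has no zero rows.
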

	The next lemma addresses the existence of right inverses for matrices in REF. It is mostly used to prove the two lemmas afterwards.
	\begin{lemma}
		\label{lem:REF-right-inverse}
		A (unit) row-echelon form matrix has a right inverse, iff it is full row-rank. In that case, the right inverses are in (unit) row-echelon form.
	\end{lemma}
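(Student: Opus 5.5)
The plan is to handle the rank equivalence by a rank argument, and then to build an explicit right inverse supported on the pivot columns of $U$ and check that it is in (unit) row-echelon form.

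Let $U\in F^{n\times m}$ be in row-echelon form. For the forward direction, suppose $UR=I_n$. Then $n=\operatorname{rank}(I_n)\leq\operatorname{rank}(U)\leq n$, so $U$ is full row-rank. In particular $U$ has no zero row. Its $n$ rows therefore have pivots $c_0<c_1<\dots<c_{n-1}$.

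For the converse and the echelon claim, assume $U$ is full row-rank with pivots $c_0<\dots<c_{n-1}$. Let $U_P\in F^{n\times n}$ be the submatrix of $U$ made of the pivot columns. By the echelon property, $U[i,c_j]=0$ whenever $j<i$, and $U[i,c_i]\neq0$. Hence $U_P$ is upper triangular with non-zero diagonal, so it is invertible. Its inverse $U_P^{-1}$ is again upper triangular with non-zero diagonal entries $U[i,c_i]^{-1}$. In the unit case these entries are $1$, so $U_P^{-1}$ is upper unitriangular. This is the standard fact that triangular invertible matrices form a group, and I will only cite it.

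Now define $R\in F^{m\times n}$ as follows. Row $c_i$ of $R$ is row $i$ of $U_P^{-1}$, and every non-pivot row of $R$ is zero. Then $UR=U_PU_P^{-1}=I_n$, so $R$ is a right inverse. The non-zero rows of $R$ sit at positions $c_0<\dots<c_{n-1}$. Row $c_i$ has its leading entry in column $i$, because $U_P^{-1}$ is upper triangular with non-zero diagonal. The leading indices therefore strictly increase down the non-zero rows, and zero rows anywhere are allowed by the paper's convention. So $R$ is in row-echelon form, and in unit row-echelon form when $U$ is unit.

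The main obstacle is the phrase ``the right inverses are in (unit) row-echelon form''. Read literally as ``every right inverse'', it is false. For example, $U=\begin{pmatrix}1&1\end{pmatrix}$ has the right inverse $\begin{pmatrix}2\\-1\end{pmatrix}$, whose two non-zero rows both have leading index $0$. So I would prove, and use downstream, the following reading: a full row-rank (unit) REF matrix admits right inverses in (unit) REF. I would also note that this right inverse is the unique one supported on the pivot rows of $U$. The uniqueness holds because $UR=I_n$ with $R$ supported on the pivot rows forces the pivot-row block of $R$ to equal $U_P^{-1}$. The later lemmas that rely on this result only need such a right inverse to exist, so this is the reading I will prove.
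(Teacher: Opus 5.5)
Your proof is correct. Your reading of the second sentence is also the right one: the paper likewise exhibits only a single (unit) REF right inverse, and its later uses (Lemmas~\ref{lem:REF-compo-frr} and~\ref{lem:unique-REF-factorisation}) need nothing more. The forward direction is the same rank argument as the paper's.

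For the converse, your construction mirrors the paper's:
\begin{itemize}
\item The paper pads $R$ into a square invertible upper-triangular matrix, writing $R=SU$ with $S$ a row selection. It then takes $U^{-1}S^\intercal$ and claims this is REF by Lemma~\ref{lem:REF-compo}.
\item You cut $R$ down to its pivot block $U_P=RS^\intercal$, take $S^\intercal U_P^{-1}$, and check the leading indices directly.
\end{itemize}

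The paper's route is shorter and reuses the $SU$ factorisation. However, its echelon claim actually depends on how the padding rows are chosen. For $R=\begin{pmatrix}0&1\end{pmatrix}$ and the unitriangular padding $U=\begin{pmatrix}1&a\\0&1\end{pmatrix}$ with $a\neq0$, one gets $U^{-1}S^\intercal=\begin{pmatrix}-a\\1\end{pmatrix}$. Both factors are REF in the paper's sense, but the product is not. So Lemma~\ref{lem:REF-compo} cannot be invoked for a factor like $S^\intercal$, which has interleaved zero rows.

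The two constructions coincide when the padding rows vanish on the pivot columns, for instance when they are unit vectors. Your direct verification is therefore the more robust argument. It also makes explicit where the ``zero rows anywhere'' convention is used: without it, $\begin{pmatrix}0&1\end{pmatrix}$ would have no REF right inverse at all.

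One small correction to your side remark. Your counterexample $(2,-1)^\intercal$ degenerates in characteristic $2$ to $(0,1)^\intercal$, which is REF under the paper's convention. A counterexample valid over every field, including $\mathbb F_2$, is $U=\begin{pmatrix}1&0\end{pmatrix}$ with right inverse $\begin{pmatrix}1\\1\end{pmatrix}$.
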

	The following lemma refines \Cref{lem:REF-compo} on products of matrices in REF, for the full-rank case:
	\begin{lemma}
		\label{lem:REF-compo-frr}
		The product of two matrices in full row-rank row-echelon form is in full row-rank row-echelon form.
	\end{lemma}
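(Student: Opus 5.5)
By \Cref{lem:REF-compo}, the product $U_0U_1$ of two full row-rank REF matrices $U_0\in F^{r_0\times r_1}$ and $U_1\in F^{r_1\times m}$ is already in row-echelon form. So the only thing left is full row-rank: we must show that $U_0U_1$ has no zero row, equivalently that its rank is $r_0$.

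\textbf{Route via right inverses.} By \Cref{lem:REF-right-inverse}, since $U_1$ is full row-rank it has a right inverse $V_1$ with $U_1V_1=I_{r_1}$, and similarly $U_0$ has a right inverse $V_0$ with $U_0V_0=I_{r_0}$. Then
\[(U_0U_1)(V_1V_0)=U_0(U_1V_1)V_0=U_0V_0=I_{r_0},\]
so $U_0U_1$ has a right inverse. Since $U_0U_1$ is in REF, the converse direction of \Cref{lem:REF-right-inverse} applies to it, and it is full row-rank.

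\textbf{Direct check.} Alternatively, the rows of $U_0U_1$ can be checked directly. Row $i$ of the product is $\sum_k U_0[i,k]\,U_1[k]$. The rows of $U_1$ are linearly independent, and row $i$ of $U_0$ is nonzero, so this combination is nonzero. Hence no row vanishes, and in REF the nonzero rows are linearly independent, which again gives full row-rank.

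\textbf{Edge case and the main subtlety.} The only point needing care is dimensions $0$ (for instance $r_0=0$), which hold trivially: the empty product has no rows and is vacuously full row-rank. The real subtlety lies in relying on \Cref{lem:REF-compo} for the echelon structure itself, which is taken as given. I expect no substantial obstacle here.
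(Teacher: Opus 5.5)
Your main route is exactly the paper's proof: the echelon structure comes from \Cref{lem:REF-compo}, and full row-rank follows because $V_1V_0$ is a right inverse of $U_0U_1$, so \Cref{lem:REF-right-inverse} applies. Your direct check is an equally valid elementary alternative, and it also settles the echelon property without citing \Cref{lem:REF-compo}: since $U_1$ has no zero rows, the leading index of row $i$ of $U_0U_1$ equals the leading index of row $j$ of $U_1$, where $j$ is the pivot column of row $i$ of $U_0$, and these indices strictly increase with $i$.
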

	The final lemma will be necessary for the proof of uniqueness of the normal form, in particular to propagate the uniqueness of individual cores:
	\begin{lemma}
		\label{lem:unique-REF-factorisation}
		Let $U_1$ and $U_2$ be two full row-rank (unit) row-echelon form matrices. If the equation $XU_1=U_2$ has a solution, then it has a unique solution, which is in full row-rank (unit) row-echelon form.
	\end{lemma}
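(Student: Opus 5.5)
Let $U_1\in F^{r_1\times m}$ and $U_2\in F^{r_2\times m}$. The plan is to handle uniqueness, the explicit form of the solution, and its echelon structure in that order.

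\emph{Uniqueness.} Since $U_1$ is full row-rank and in (unit) REF, \Cref{lem:REF-right-inverse} gives a right inverse $R$ with $U_1R=I_{r_1}$. If $XU_1=U_2$, then $X=XU_1R=U_2R$. So any solution is determined, and equal to $U_2R$; in particular the solution is the same whichever right inverse $R$ is used.

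\emph{Echelon structure.} By \Cref{lem:REF-right-inverse} again, $R$ is itself in (unit) REF. Hence $X=U_2R$ is a product of two REF matrices, and by \Cref{lem:REF-compo} it is in REF, and unit REF when $U_1$ and $U_2$ are unit.

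\emph{Full row-rank.} The remaining point is that $X$ is full row-rank; this is the step needing care. \Cref{lem:REF-compo-frr} would apply directly if we knew $R$ were full row-rank, but $R$ is $m\times r_1$ and generally is not. Instead I argue through ranks: $U_2=XU_1$ gives $r_2=\operatorname{rank}(U_2)\le\operatorname{rank}(X)\le r_2$, since $X$ has $r_2$ rows. So $X$ has rank $r_2$, i.e.\ full row-rank.

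\emph{REF row by row.} To be safe about the REF claim, I can also check it directly without relying on the exact form of $R$ in \Cref{lem:REF-right-inverse}. Let $p_0<\dots<p_{r_1-1}$ be the pivot columns of $U_1$. Row $i$ of $U_2$ equals $\sum_k X[i,k]\,U_1[k]$. Its leading index is therefore $p_{k_i}$, where $k_i$ is the smallest $k$ with $X[i,k]\neq 0$, because the rows of $U_1$ have strictly increasing leading indices. Since the leading indices of the rows of $U_2$ strictly increase with $i$, the $k_i$ strictly increase too, so $X$ is in REF. In the unit case, the coefficient $U_2[i,p_{k_i}]$ equals $X[i,k_i]\cdot 1$, and it is $1$ because $U_2$ is unit; hence $X$ is unit.
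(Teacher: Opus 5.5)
Your proof is correct. Its main line is the paper's: uniqueness comes from a right inverse of $U_1$, and the echelon form of $X=U_2R$ comes from \Cref{lem:REF-right-inverse,lem:REF-compo}. For full row-rank, the paper exhibits the right inverse $U_1U_2^\diamond$ of $X$, which is the same fact as your inequality $r_2=\operatorname{rk}(U_2)\le\operatorname{rk}(X)$.

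The genuinely different part is your final pivot-tracking paragraph. You should treat it as the actual proof of the echelon claim, not a safety net. A REF right inverse of $U_1$ has exactly $m-r_1$ zero rows, and they need not be at the bottom. For $U_1=\begin{pmatrix}0&1\end{pmatrix}$ the right inverses are $\begin{pmatrix}a\\1\end{pmatrix}$, and only $a=0$ gives a REF matrix. With zero rows allowed anywhere, as in the paper's convention, \Cref{lem:REF-compo} fails for such factors: $\begin{pmatrix}1&1\\0&1\end{pmatrix}\begin{pmatrix}0\\1\end{pmatrix}=\begin{pmatrix}1\\1\end{pmatrix}$ is not in REF, although both factors are. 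In this example $XU_1=U_2$ has no solution, so the lemma itself is not threatened. What the example shows is that the echelon form of $U_2R$ comes from the hypothesis that a solution exists. Your direct argument uses that hypothesis; the composition route, in your main line and in the paper, does not.

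The direct argument also gives more. Since every row of $X$ is nonzero and the $k_i$ strictly increase, full row-rank follows immediately, and the pivot columns of $U_2$ are seen to be among those of $U_1$. Do state explicitly that $k_i$ exists because $U_2$, being full row-rank, has no zero row, so $X[i]\neq 0$.

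In short, the paper's route is shorter and reuses its lemma library, while yours is self-contained and does not depend on where zero rows are allowed.
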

	
	\section{The Normal Form}
	\label{sec:NF}
	
	We start by giving an algorithm that builds a tensor train from an arbitrary tensor for which we have the full description, using the LDPU decomposition described above. Since the algorithm is deterministic, we will define its result as the (unique) normal form of the tensor train decomposition. Defining the normal form from this algorithm might seem unorthodox, but we feel like it is simpler to understand the normal form by trying to build a TT directly from the full tensor. We will later show how to reach that normal form from any given tensor train, without the need to compute the full tensor.
	
	\subsection{Breaking down arbitrary tensors}
	
	To do so, we need three helper functions that essentially reshape the cores/matrices:
	\begin{itemize}
		\item $\operatorname{split}(A,n)$ is a ``partial vectorisation'', it assumes matrix $A$ has $kn$ columns for some $k$, and breaks each row of $A$ into $n$ pieces of size $k$, to build a block of those $n$ smaller rows in order, matricially and graphically:
		\[
		\renewcommand{\arraystretch}{0.8}
		\setlength{\arraycolsep}{2pt}
		\operatorname{split}\left[
		\begin{pmatrix}
			a_{1,1}\cdot \ldots \cdot a_{1,n}\\
			\vdots\\
			a_{\ell,1}\cdot \ldots \cdot a_{\ell,n}
		\end{pmatrix}
		,n
		\right]
		= 
		\scalebox{0.8}{$
		\begin{pmatrix}
			a_{1,1}\\
			\vdots\\
			a_{1,n}\\
			\scalebox{1.3}{\textbf{$\vdots$}}\\
			a_{\ell,1}\\
			\vdots\\
			a_{\ell,n}\\			
		\end{pmatrix}$}
		\hspace*{3em}
		\tikzfig{split}
		\]
		where $a_{i,j}$ are row-vectors of size $k$, and $(\_ \cdot\_)$ is the concatenation of row vectors. 
		\item $\operatorname{untwine}(L,n,P)$ assumes $L$ has $nk$ rows and $\ell$ columns for some $k$ and $\ell$, and that $P$ is a permutation matrix of size $k$. The result is a core $C\in F^{n\times k\times \ell}$ such that 
		$PC[i] = L[j=i \bmod n]$, where $L[j=i \bmod n ]$ is matrix $L$ where we only keep the rows whose indices verify the condition. Graphically:
		\[\tikzfig{untwine-2}\]
		\item $\operatorname{intertwine}$ is the reverse operation:  $\operatorname{intertwine}(\operatorname{untwine}(L,n,P),P)=L$, graphically:
		\[\tikzfig{intertwine-2}\]
	\end{itemize}
	
	The algorithm is given in \Cref{alg:NF-from-full-tensor}. For a full tensor $T\in F^{n_1\times ... \times n_d}$, it is to be called on the row-vectorised version of $T$ (described as $\vec T^\intercal$ above), the list $[n_1,...,n_d]$, an empty list for the accumulator, and the trivial permutation matrix $\begin{pmatrix}1\end{pmatrix}$.
	
	\begin{algorithm2e}[!ht]
		\caption{Normal form from full tensor}
		\label{alg:NF-from-full-tensor}
		\KwData{A list of integers $\ell = [n_1,...,n_d]$, a matrix $A$ with $\prod_i n_i$ columns, an accumulator $C$ (containing the resulting list of cores), a permutation $P$.}
		\KwResult{Recursively builds the list of cores that constitutes the tensor train for the starting tensor.}
		$n \gets n_1$\;
		$\ell\gets [n_2,...,n_d]$\;
		$A'\gets \operatorname{split}(A,n)$\;
		$L,D,P',U\gets \operatorname{LDPU}(A')$\;
		$M\gets LDP'$\;
		$c\gets \operatorname{untwine}(M,n,P)$\;
		append $c$ to $C$\;
		\eIf{$l$ is empty}{\Return{$C$}}{
			Recursively call the algorithm on $P'U$, $\ell$, $C$, $P'$
		}
	\end{algorithm2e}

	We can graphically interpret this algorithm using the tensor network notation, in \Cref{fig:algo-interp}, using $P_d=U_d=\begin{pmatrix}1\end{pmatrix}$. This graphical understanding will prove useful for the upcoming proofs.
	
	\begin{figure*}[!ht]
		\def\fig{algo-interpretation-left-right}
		\begin{align*}
			\scalebox{0.75}{$
			\input{./figures/\fig/\fig_00.tikz}$}
			&\scalebox{0.75}{$\eq{}\input{./figures/\fig/\fig_01.tikz}
			\eq{}\input{./figures/\fig/\fig_02.tikz}
			\eq{}\input{./figures/\fig/\fig_03.tikz}
			\eq{}\input{./figures/\fig/\fig_04.tikz}$}\\
			&\scalebox{0.75}{$\eq{}\input{./figures/\fig/\fig_05.tikz}
			\eq{}...
			\eq{}\input{./figures/\fig/\fig_06.tikz}$}\\
			&\scalebox{0.75}{$\eq{}\input{./figures/\fig/\fig_07.tikz}$}
		\end{align*}
		\caption{Graphical interpretation of \Cref{alg:NF-from-full-tensor}}
		\label{fig:algo-interp}
	\end{figure*}
	
	An example of the application of the algorithm can be found in \Cref{ex:NF}, with steps detailed in the appendix.
	
	\begin{figure}[!ht]
		\centering
		$\vec T^\intercal = \left(\begin{array}{rrrrrrrrrrrr}
			0 & 0 & 1 & -1 & 1 & 2 & 2 & -1 & -1 & 1 & 0 & 2
		\end{array}\right)$\\
		$\begin{array}{ll}
			c_1[0]=\begin{pmatrix}0&1\end{pmatrix}&
			c_2[0]=\begin{pmatrix}1&\frac12&0\\0&0&1\end{pmatrix} 
			\\
			c_1[1]=\begin{pmatrix}2&0\end{pmatrix}&
			c_2[1]=\begin{pmatrix}\frac12&\frac12&\frac12\\-1&0&0\end{pmatrix}
		\end{array}
		c_3[0]=\begin{pmatrix}1\\0\\0\end{pmatrix}~~ c_3[1]=\begin{pmatrix}-1\\1\\0\end{pmatrix}~~
		c_3[2]=\begin{pmatrix}-2\\3\\1\end{pmatrix}$
		\caption{The row-vector form of a tensor $T \in \mathbb Q^{2\times2\times3}$, and the result of \Cref{alg:NF-from-full-tensor} applied to it. The steps are detailed in the appendix.}
		\label{ex:NF}
	\end{figure}

	\subsection{Unique characterisation}
	
	The tensor train obtained from this algorithm has two main properties, captured by the following notion of normal form:
	
	\begin{definition}
		Let $C=(C_1,...,C_d)$ be a tensor train with modes $(n_1,...,n_d)$ and TT-ranks $(r_1,...,r_d)$. We say that $C$ is in normal form if there exist permutation matrices $P_0= \begin{pmatrix}1\end{pmatrix}, P_1,...,P_d$, such that for all $1\leq i\leq d$:
		\begin{itemize}
			\item the block matrix $\begin{pmatrix}C_i[0]\mid \ldots \mid C_i[n_i-1]\end{pmatrix}$ is in full row-rank row-echelon form, unit if $i>1$ 
			\item $\operatorname{intertwine}(C_i, n_i, P_{i-1})P_i^{-1}$ is in full column-rank column-echelon form and $P_i$-con\-ditioned
		\end{itemize}
	\end{definition}
	Notice that these conditions can be checked in polynomial time in the size of the tensor train.
	
	Notice also that for $i=1$, the first condition tells us that either $r_1=0$, or $C_1$ is non-null, for $C_1[k]$ is a row vector for all $k$, hence $\begin{pmatrix}C_1[0]\mid \ldots\end{pmatrix}$ is obviously in row-echelon form unless it is null. Similarly, for $i=d$, the second condition is obviously met unless $C_d$ is null.
	
	Graphically, the two conditions can be expressed as:
	\begin{itemize}
		\item $\tikzfig{NF-REF-2}$ is in row-echelon form (unit for $i>1$) and full row-rank
		\item $\tikzfig{NF-CEF-2}$ is in column-echelon form, full column-rank, and $P_i$-conditioned
	\end{itemize}
	We call this form \emph{normal}, for, as we will see, it can be obtained by the upcoming reduction of tensor trains.
	
	\begin{proposition}
		\label{prop:algo-in-NF}
		The tensor train obtained by \Cref{alg:NF-from-full-tensor} is in normal form.
	\end{proposition}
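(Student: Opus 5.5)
The plan is to fix notation for one pass of \Cref{alg:NF-from-full-tensor} and check the two conditions of the normal form for each core separately. At the $i$-th recursive call the input matrix is $A_i$, with $A_1=\vec T^\intercal$ and $A_i=P_{i-1}U_{i-1}$ for $i>1$; the input permutation is $P_{i-1}$, with $P_0=\begin{pmatrix}1\end{pmatrix}$. The algorithm computes $A'_i=\operatorname{split}(A_i,n_i)=L_iD_iP_iU_i$ (the LDPU decomposition of \Cref{prop:LDPU-uniqueness}), sets $M_i=L_iD_iP_i$, and outputs $c_i=\operatorname{untwine}(M_i,n_i,P_{i-1})$. The permutations witnessing the normal form will be the $P_i$ produced by the algorithm.

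\emph{Second condition (column-echelon side).} By definition of intertwine, $\operatorname{intertwine}(c_i,n_i,P_{i-1})P_i^{-1}=M_iP_i^{-1}=L_iD_i$. The LDPU decomposition gives that $L_i$ is unit CEF, full column-rank and $P_i$-conditioned. The matrix $D_i$ is diagonal with non-zero entries, since $L_iD_iP_iU_i$ has rank equal to the size of $D_i$. Right-multiplying by $D_i$ only rescales columns, so it preserves the zero pattern of $L_i$. Hence $L_iD_i$ is still in column-echelon form, full column-rank and $P_i$-conditioned; it is merely no longer unit, which the definition does not require.

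\emph{First condition (row-echelon side).} This is the main step; the obstacle is purely the index bookkeeping of split and untwine, which I would handle in the graphical calculus as in \Cref{fig:algo-interp}. Let $B_i=\begin{pmatrix}c_i[0]\mid\ldots\mid c_i[n_i-1]\end{pmatrix}$. By the defining property of untwine, $P_{i-1}c_i[k]U_i$ is exactly the matrix formed by the rows of $M_iU_i=A'_i$ with index congruent to $k \bmod n_i$. By the definition of split, these are the $k$-th column blocks of $A_i$. Reassembling the blocks should give
\[A_i=P_{i-1}B_i\,(I_{n_i}\otimes U_i),\]
where $I_{n_i}\otimes U_i$ is the Kronecker product $\operatorname{diag}(U_i,\ldots,U_i)$.

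For $i>1$, cancelling $P_{i-1}$ yields $U_{i-1}=B_i(I_{n_i}\otimes U_i)$. Here $U_{i-1}$ is full row-rank unit REF by the LDPU decomposition. The matrix $I_{n_i}\otimes U_i$ is full row-rank unit REF by \Cref{lem:REF-tensor-id}. So $B_i$ solves $X(I_{n_i}\otimes U_i)=U_{i-1}$, and \Cref{lem:unique-REF-factorisation} then gives that $B_i$ is full row-rank unit REF.

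For $i=1$ we have $\vec T^\intercal=B_1(I_{n_1}\otimes U_1)$ with $B_1$ a row vector. If $T\neq 0$, then $B_1\neq0$, so it is trivially a full row-rank REF. If $T=0$, then $r_1=0$ and $B_1$ has no rows, so the condition holds vacuously.

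\emph{Final step.} At the end of the recursion, $A'_d$ has a single column. Its $U_d$ is therefore $\begin{pmatrix}1\end{pmatrix}$, or empty if it is null, which matches $r_d=1$ (or the degenerate null case). So the identity above also applies at $i=d$, and both conditions hold for every $1\leq i\leq d$.
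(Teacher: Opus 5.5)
Your proof is correct and is essentially the paper's argument. The second condition reduces to $\operatorname{intertwine}(c_i,n_i,P_{i-1})P_i^{-1}=L_iD_i$, and the first follows from $U_{i-1}=B_i(I_{n_i}\otimes U_i)$ together with \Cref{lem:REF-tensor-id} and \Cref{lem:unique-REF-factorisation}; the paper wraps this same step in a downward induction on $i$, which your direct use of $U_i$ from the LDPU makes unnecessary. One small slip: when $T=0$, $B_1$ is a $1\times 0$ matrix (one row and no columns, since $r_0=1$), not a matrix with no rows, so the $i=1$ case holds only by the paper's convention that $r_1=0$ is acceptable, not vacuously.
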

	
	\begin{proof}
		One can see that the second condition of normal forms is obviously satisfied, by construction. To show that the first condition is also satisfied, we show by induction, and using \Cref{lem:REF-tensor-id}, that the composition of the last cores, up to a reshape, is in unit row-echelon form. We can then use \Cref{lem:unique-REF-factorisation} to conclude. The full proof can be found in the appendix.
	\end{proof}
	
	It is then possible to show the main result of the paper:
	
	\begin{theorem}
		\label{thm:NF-uniqueness}
		The normal form of tensor trains exists and is unique.
	\end{theorem}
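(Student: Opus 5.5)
Existence is immediate from \Cref{prop:algo-in-NF}: for any tensor $T$, running \Cref{alg:NF-from-full-tensor} on $\vec T^\intercal$ yields a tensor train representing $T$ (by the graphical interpretation in \Cref{fig:algo-interp}) which is in normal form. The real content is uniqueness: if $C=(C_1,\dots,C_d)$ and $C'=(C'_1,\dots,C'_d)$ are two tensor trains in normal form representing the same tensor $T$, then $C=C'$ (same TT-ranks, same cores).

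For uniqueness, I would argue by induction on the position $i$, working from the left. For each $i$, define $R_i$ as the reshaped contraction of the cores $C_{i+1},\dots,C_d$, viewed as an $r_i\times (n_{i+1}\cdots n_d)$ matrix. By the first normal-form condition and \Cref{lem:REF-tensor-id} together with \Cref{lem:REF-compo-frr}, $R_i$ is in full row-rank unit row-echelon form, by a backward induction from $i=d$ mirroring the proof of \Cref{prop:algo-in-NF}. Similarly, define $\Lambda_i$ as the contraction of $C_1,\dots,C_i$ reshaped into an $(n_1\cdots n_i)\times r_i$ matrix. Then $\vec T$, reshaped at cut $i$, equals $\Lambda_i R_i$.

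The key claim is that, after applying $P_i^{-1}$, the normal-form conditions make $\Lambda_i$ factor as $L D P_i$, with $L$ unit CEF and $P_i$-conditioned. This factorisation should follow from an inductive use of the second condition, $\operatorname{intertwine}(C_i,n_i,P_{i-1})P_i^{-1}$ being CEF and $P_i$-conditioned, combined with the fact that $\Lambda_{i-1}$ has the analogous form. Consequently the unfolding of $T$ at cut $i$ has the LDPU decomposition $L\,D\,P_i\,R_i$, and by \Cref{prop:LDPU-uniqueness} the pieces $L$, $D$, $P_i$, $R_i$ and $r_i$ are uniquely determined by $T$. More precisely, I would apply this to the split step as the algorithm does. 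Assuming inductively that $C_1,\dots,C_{i-1}$ and $P_{i-1}$ agree for $C$ and $C'$, the matrix $P_{i-1}R_{i-1}$ is determined. Splitting it by $n_i$ gives $\operatorname{intertwine}(C_i,n_i,P_{i-1})\cdot R_i$. Writing $\operatorname{intertwine}(C_i,\dots)P_i^{-1}$ as a unit CEF, $P_i$-conditioned matrix times a diagonal matrix (for $i>1$ the unit REF of $R_{i-1}$ should force exactly this), the product $(\ldots)D P_i R_i$ is an LDPU decomposition. Uniqueness then identifies $C_i$, $P_i$ and $R_i$. Finally, \Cref{lem:unique-REF-factorisation} applied to $R_{i-1}=X\cdot(I\otimes R_i)$-type equations gives agreement of the remaining core, propagating to $C_d$.

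The main obstacle is checking that the split of $P_{i-1}R_{i-1}$ yields an honest LDPU decomposition. This requires $R_i$ to be unit full row-rank REF; we obtain it from the first condition via the Kronecker/composition lemmas, as in \Cref{prop:algo-in-NF}. It also requires the left factor, after pulling out $P_i$ and the diagonal, to be unit CEF and $P_i$-conditioned. The delicate points are the role of $D$, since the first core is non-unit while later cores are unit, and the bookkeeping of the permutations $P_{i-1}$ inside $\operatorname{intertwine}$. I would handle these graphically, using the string-diagram rewrites of \Cref{fig:algo-interp} read in reverse.
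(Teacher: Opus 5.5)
\textbf{Verdict.} The split-step induction you give (the part beginning ``More precisely'') is sound, and it is exactly the paper's argument. You carry along that $P_{i-1}R_{i-1}$ is determined, split it by $n_i$ to obtain $\bigl(\operatorname{intertwine}(C_i,n_i,P_{i-1})P_i^{-1}\bigr)P_iR_i$, recognise this as an LDPU, and let \Cref{prop:LDPU-uniqueness} fix $C_i$, $P_i$ and $R_i$.

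\textbf{The gap.} The ``key claim'' stated just before it is false. So is the sentence ``Consequently the unfolding of $T$ at cut $i$ has the LDPU decomposition $LDP_iR_i$.'' What does hold, and what the paper proves by induction with the transposed Kronecker/composition lemmas, is that $\Lambda_iP_i^{-1}$ is in full column-rank CEF. $P_i$-conditioning is \emph{not} inherited.

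For a counterexample over $\mathbb F_2$, take $T\in\mathbb F_2^{2\times2\times2}$ with $\vec T^\intercal=(0,1,0,1,0,1,1,1)$. Its normal form, i.e.\ the output of \Cref{alg:NF-from-full-tensor} (both conditions are easily checked), is
\[
C_1[0]=\begin{pmatrix}1&0\end{pmatrix},\ C_1[1]=\begin{pmatrix}1&1\end{pmatrix},\ C_2[0]=\begin{pmatrix}0&1\\0&0\end{pmatrix},\ C_2[1]=\begin{pmatrix}0&1\\1&0\end{pmatrix},\ C_3[0]=\begin{pmatrix}1\\0\end{pmatrix},\ C_3[1]=\begin{pmatrix}0\\1\end{pmatrix},
\]
with $P_1=I_2$ and $P_2=\begin{pmatrix}0&1\\1&0\end{pmatrix}$, so $R_2=I_2$.
\begin{itemize}
\item $\Lambda_2P_2^{-1}$ has rows $(1,0),(1,0),(1,0),(1,1)$, with pivot rows $0$ and $3$.
\item With $\sigma$ the permutation of $P_2$, we have $\sigma^{-1}(0)=1>0=\sigma^{-1}(1)$. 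So $P_2$-conditioning would force the entry in row $3$, column $0$ to vanish, but it is $1$.
\item Accordingly, the LDPU of the cut-$2$ unfolding (rows $(0,1),(0,1),(0,1),(1,1)$) has $U=\begin{pmatrix}1&1\\0&1\end{pmatrix}\neq R_2$.
\end{itemize}

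\textbf{Why the induction fails.} We have $\Lambda_iP_i^{-1}=(\Lambda_{i-1}P_{i-1}^{-1}\otimes I_{n_i})X_i$ with $X_i=\operatorname{intertwine}(C_i,n_i,P_{i-1})P_i^{-1}$. The row of $\Lambda_{i-1}P_{i-1}^{-1}\otimes I_{n_i}$ sitting at a pivot of $\Lambda_i$ can have nonzero entries to the left of its own pivot. These pick up entries of $X_i$ lying in non-pivot rows and earlier-discovered columns. In the example, row $(1,1)$ picks up row $(0,1)$ of $X_2$, which is $(1,0)$. So the normal form is not ``cutwise LDPU''. This is precisely why both the algorithm and the paper's proof only ever decompose the split of the previous right factor, never the full unfolding.

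\textbf{What to keep, with small corrections.} Delete the key-claim paragraph and keep the split-step argument, with three adjustments.
\begin{itemize}
\item The factorisation of $X_i$ as (unit CEF, $P_i$-conditioned)$\cdot D$ comes directly from the second normal-form condition: take $D$ to be the pivot entries, since conditioning is a zero-pattern property. It does not come from $R_{i-1}$ being unit REF.
\item What the first condition supplies, through your backward induction, is that $R_i$ is a legitimate unit full row-rank $U$-factor. With this, your ``delicate points'' about $D$ and the non-unit first core disappear: the diagonal is simply absorbed into the left factor.
\item The closing appeal to \Cref{lem:unique-REF-factorisation} is harmless but unnecessary. At $i=d$ one has $R_d=(1)$, and the same LDPU step already fixes $C_d$.
\end{itemize}
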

	
	\begin{proof}
		We start by showing by induction that the composition of the last cores up to a reshape is in full row-rank unit row-echelon form, using the first condition on normals forms, as well as Lemmas \ref{lem:REF-tensor-id}, \ref{lem:REF-compo} and \ref{lem:REF-compo-frr}. Similarly, we show by induction that the composition of the first cores, up to a reshape and permutation matrix, is in full column-rank column-echelon form. We can then fix each core one after the other by showing that the cut between the first cores and the rest corresponds to the (unique) LDPU decomposition of the full tensor (up to a reshape). The full proof can be found in the appendix.
	\end{proof}
	
	\section{Reduction of Tensor Trains}
	\label{sec:reduc}
	
	We show in this section how to turn any tensor train in normal form, by only performing operations on the tensor train itself, not on the full tensor, resulting in an algorithm that is polynomial in the size of the tensor train, and not of the full tensor. This can be seen as the equivalent for tensor trains of the usual (weighted) BDD reductions.
	
	Just like the HSVD form, the algorithm will happen in two sweeps, the first from right to left, and the second from left to right, but the two will be slightly different (although at their core they perform a rank-factorisation to reduce the size of the tensor train). We will show that the sweeps cannot augment the size of the tensor train, and hence that normal form minimises the size of the tensor train (for the given ordering of the modes). The first and second sweeps essentially enforce that the tensor train verifies the first and second normal form conditions, respectively.
	
	\subsection{First sweep}
	
	The first sweep is the simplest: for each core starting from the last, we make sure that $\tikzfig{NF-REF-2}$ is in unit full row-rank row-echelon form by computing its row-echelon form\footnote{This can be done by the LDPU decomposition, or, in this case, more simply by the PLU decomposition. We stick here with the LDPU decomposition to remain consistent with the rest of the paper.}:
	\def\fig{sweep-1-details}
	\begin{align*}
		
		\eq{}
		\eq{}\input{./figures/\fig/\fig_02.tikz}
		\eq{}\input{./figures/\fig/\fig_03.tikz}
	\end{align*}
	The new core $C_i'$ is then a reshape of $U$, while $C_{i-1}$ is updated by pushing the matrix $LDP$ into its input. 
	It should be clear that this operation does not modify the full tensor. The algorithm for the first sweep is given in more detail in \Cref{alg:sweep-1}.
	
	\begin{algorithm2e}[!ht]
		\caption{First sweep for TT reduction}
		\label{alg:sweep-1}
		\KwData{A tensor train $(C_1,...,C_d)$ with modes $n_1,...,n_d$.}
		\KwResult{Reduces the tensor train, to make it compliant with Condition 1 of the normal form.}
		\For{$k$ from $d$ down to $2$}{
			$L,D,P,U=\begin{pmatrix}
				U_1\mid \ldots\mid U_{n_k}
			\end{pmatrix}\gets \operatorname{LDPU}\left[\begin{pmatrix}
				C_k[0]\mid \ldots\mid C_k[n_k-1]
			\end{pmatrix}\right]$\;
			\For{$i\in\{0,...,n_{k-1}-1\}$}{
				$C_{k-1}[i] \gets C_{k-1}[i]LDP$
			}
			\For{$i\in\{0,...,n_k-1\}$}{
				$C_k[i] \gets U_i$
			}
		}
	\end{algorithm2e}
	
	While this first sweep in itself is not enough to reach the normal form (nor is it enough to get to the minimal size of the tensor train), it is enough to answer the nullity of a tensor train:
	\begin{proposition}
		\label{prop:nullity-checking}
		Let $C$ be a tensor train obtained by the first sweep, with $C_1\in F^{n_1\times 1\times r_1}$ its first core. The full tensor is null iff $C_1$ is null.
	\end{proposition}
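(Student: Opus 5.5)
The plan is to show that after the first sweep, the composite of cores $2$ to $d$, suitably reshaped, is a full row-rank unit row-echelon matrix. Nullity of the full tensor then reduces to nullity of $C_1$ by right-invertibility.

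Define, for $2\leq k\leq d$, the matrix $R_k\in F^{r_{k-1}\times (n_k\cdots n_d)}$ whose row $a$ is the row-vectorisation of the partial train $C_k[i_k]\cdots C_d[i_d]$ indexed by the $a$-th left bond index. Graphically, $R_k$ is obtained by plugging the cores $C_k,\ldots,C_d$ together and bending all mode wires to the output. Then
\[R_k = \begin{pmatrix}C_k[0]\mid\ldots\mid C_k[n_k-1]\end{pmatrix}\,(I_{n_k}\otimes R_{k+1}),\]
up to the reshape $\iota$, with $R_{d+1}=\begin{pmatrix}1\end{pmatrix}$. The full row-vectorised tensor is $\vec T^\intercal = \begin{pmatrix}C_1[0]\mid\ldots\mid C_1[n_1-1]\end{pmatrix}(I_{n_1}\otimes R_2)$.

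The first step is to check what the sweep guarantees. When step $k$ of \Cref{alg:sweep-1} runs, it replaces the block matrix of $C_k$ by $U$, a full row-rank unit REF matrix. Later steps only modify $C_{k-1},\ldots,C_1$, by right-multiplication of $C_{k-1}$ and then replacement of $C_{k-1}$ itself. So at the end, every block matrix $\begin{pmatrix}C_k[0]\mid\ldots\end{pmatrix}$ for $k\geq 2$ is full row-rank unit REF. The ranks may have shrunk, but that does not matter.

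The second step is an induction from $k=d$ down to $2$ showing that $R_k$ is full row-rank unit REF. The base case $R_d$ is just the block matrix of $C_d$. For the inductive step, $I_{n_k}\otimes R_{k+1}$ is full row-rank unit REF by \Cref{lem:REF-tensor-id}. Its product with the block matrix of $C_k$ is then full row-rank REF by \Cref{lem:REF-compo-frr}, and unit by \Cref{lem:REF-compo}. The main point of care here is that the reshape really matches a Kronecker product with the identity in the correct order, so that the column ordering agrees with row-vectorisation. This is routine using the $\iota$ axioms.

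The third step concludes. Since $I_{n_1}\otimes R_2$ is full row-rank, \Cref{lem:REF-right-inverse} gives it a right inverse $G$. If $T=0$, then the block matrix of $C_1$ equals $\vec T^\intercal G=0$, so $C_1$ is null. Conversely, if $C_1$ is null then $\vec T^\intercal=0$ trivially.

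The degenerate case where some $r_k=0$ is covered by the convention that empty matrices are null and full row-rank. I expect only the second step's reshape bookkeeping to need real attention.
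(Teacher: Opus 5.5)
Your proposal is correct and follows the paper's argument. The paper's proof also uses the fact that, after the first sweep, the reshaped partial train from $C_2$ onward is a full row-rank row-echelon matrix, so that a null full tensor forces $C_1$ to be null; you make explicit the induction via \Cref{lem:REF-tensor-id,lem:REF-compo,lem:REF-compo-frr} and the right-inverse step (\Cref{lem:REF-right-inverse}), which the paper leaves implicit by borrowing them from the uniqueness proof.
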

	This result can be useful for several purposes, without the need to get to the unique normal form. Indeed, one can check if two tensor trains are equal by subtracting one from the other (done in polynomial time), apply a first sweep to the result and check if it is null. When working in $\mathbb F_2$, one can check the satisfiability of a formula by building a tensor train representation of it, and checking for its nullity in the same manner.
	
	Related to satisfiability, one may want to enumerate, or at the very least extract a solution, to the satisfiability of a formula when it is indeed satisfiable. More generally, it is possible to get the index and value of the first non-zero element of a tensor train obtained from the first sweep.
	\begin{proposition}
		\label{prop:leading-index-coeff}
		Let $C$ be a non-null tensor train of order $d$, maximum mode dimension $n$ and maximum TT-rank $r$, obtained by the first sweep. The leading index and leading coefficient of the full tensor can be computed in time $\mathcal O(dnr)$. An algorithm is given in \Cref{alg:leading-index-coeff}.
	\end{proposition}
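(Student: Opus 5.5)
Under the first sweep, for every $k\geq 2$ the block matrix $\begin{pmatrix}C_k[0]\mid\ldots\mid C_k[n_k-1]\end{pmatrix}$ is in full row-rank unit row-echelon form. By induction on $k$ from $d$ down to $2$, using \Cref{lem:REF-tensor-id} and \Cref{lem:REF-compo-frr} as in the proof of \Cref{thm:NF-uniqueness}, the reshaped product $R_k$ of the cores $C_k,\dots,C_d$ is a full row-rank unit REF matrix of size $r_{k-1}\times n_k\cdots n_d$. Its rows are indexed by the rank index $a$ and its columns by $(i_k,\dots,i_d)$ in lexicographic order. I take ``leading index'' to mean the lexicographically first multi-index $(i_1,\dots,i_d)$ with $T[i_1,\dots,i_d]\neq 0$, and ``leading coefficient'' to mean that entry.

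The key observation concerns a nonzero row vector $v$ of size $r_{k-1}$. Because $R_k$ is in REF with distinct pivots, the leading index of $vR_k$ is the pivot of row $a^\ast$, where $a^\ast$ is the smallest index with $v[a^\ast]\neq 0$. Indeed, the rows with larger index have strictly larger pivots, and unit REF makes the leading coefficient of $vR_k$ equal to $v[a^\ast]$. The pivot of row $a$ of $R_k$ is found recursively: it is $(i_k, \text{pivot of row } b \text{ of } R_{k+1})$, where $i_k$ is the first index such that row $a$ of $C_k[i_k]$ is nonzero, and $b$ is the first nonzero column in that row. This holds because the block matrix is unit REF: its pivot sits in block $i_k$ at column $b$ with value $1$, and applying the same observation to the row vector $C_k[i_k][a]$ (whose first nonzero entry is at $b$) gives the rest.

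The algorithm is then as follows. Write $T$ as the row vector $\begin{pmatrix}C_1[0]R_2\mid\ldots\mid C_1[n_1-1]R_2\end{pmatrix}$. Since $T$ is non-null, $C_1$ is non-null by \Cref{prop:nullity-checking}. Let $i_1$ be the first index with $C_1[i_1]\neq 0$. Earlier blocks of $T$ are zero, and block $i_1$ is nonzero because $R_2$ has full row rank. Let $a$ be the first nonzero position of $C_1[i_1]$ and set the coefficient $c:=C_1[i_1][a]$. Then, for $k=2,\dots,d$, set $i_k$ to the first index with row $a$ of $C_k[i_k]$ nonzero, and set $a$ to the first nonzero column of that row. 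The value of that entry is $1$ by unitality, so $c$ is unchanged. At $k=d$ we have $r_d=1$, so the loop terminates. The output is $(i_1,\dots,i_d)$ with coefficient $c$.

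For the cost, each step scans at most $n$ rows of length at most $r$, giving $\mathcal O(nr)$ per core and $\mathcal O(dnr)$ in total.

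The main obstacle is stating the pivot-composition claim cleanly across the reshape, namely that the lexicographic order on $(i_k,\dots,i_d)$ matches the column order of $R_k$. This ordering is fixed by the isomorphism $\iota$ ($e_i\otimes e_j\mapsto e_{mi+j}$), which I would verify graphically.
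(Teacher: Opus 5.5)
Your proposal is correct and takes essentially the same route as the paper. The trailing product $R_k$ (the paper's $U_k$) is full row-rank unit REF by the same induction, so each nonzeroness test reduces to the current core, and the echelon structure then fixes the next index by scanning a single row, giving exactly \Cref{alg:leading-index-coeff} with its $\mathcal O(dnr)$ cost. Your pivot-of-row bookkeeping is in fact slightly more careful than the paper's: it uses only that the block matrix $\begin{pmatrix}C_k[0]\mid\ldots\mid C_k[n_k-1]\end{pmatrix}$ is in REF, whereas the paper relies on each slice $C_{k+1}[j]$ being in REF, which an individual slice need not be, and on an ``iff'' that holds only for the first such $j$.
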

	Notice that although the algorithm computes a coefficient of the tensor (the leading one), it does so in $\mathcal O(dnr)$, which is often better than the $\mathcal O(dr^2)$ required to get a coefficient at an arbitrary index -- by doing $d$ matrix-vector products. This is thanks to the echelon structure of the cores, which allows us, for the leading coefficient, to avoid unnecessary computations.
	
	\subsection{Second sweep and full reduction}
	
	Although very useful, the first sweep does not yield a unique normal form, nor does it even minimise the TT-ranks.
	
	The second sweep fixes this issue, and is similar in essence, but may seem less intuitive because of the permutations that we have to keep track of, although the graphical notation should make it clearer. We start this time from the first core with trivial permutation $P_0$ on one element, and enforce the core's compliance with the second condition of normal forms, by updating the core as follows, assuming permutation $P$:
	\def\fig{sweep-2-details}
	\begin{align*}
		
		\eq{}
		\eq{}\input{./figures/\fig/\fig_02.tikz}
		\eq{}\input{./figures/\fig/\fig_03.tikz}
	\end{align*}
	The core $C_{i+1}$ is then updated by pushing $U$ into its first mode. 
	we then continue with the next core, using the newly computed $P'$ as the permutation matrix. The algorithm for the second sweep is given in detail in \Cref{alg:sweep-2}.
	
	\begin{algorithm2e}[!ht]
		\caption{Second sweep for TT reduction}
		\label{alg:sweep-2}
		\KwData{A tensor train $(C_1,...,C_d)$ with modes $n_1,...,n_d$.}
		\KwResult{Reduces the tensor train, to make it compliant with Condition 2 of the normal form.}
		$P \gets \begin{pmatrix}1\end{pmatrix}$\;
		\For{$k$ from $1$ to $d-1$}{
			$L,D,P',U\gets \operatorname{LDPU}(\operatorname{intertwine}(C_k,P))$\;
			\For{$i\in\{0,...,n_{k+1}-1\}$}{
				$C_{k+1}[i] \gets UC_{k+1}[i]$
			}
			$C_k\gets \operatorname{untwine}(LDP',n_k,P)$\;
			$P\gets P'$
		}
	\end{algorithm2e}
	
	\begin{theorem}
		\label{thm:reduction}
		The algorithm consisting of the first sweep followed by the second sweep sets any tensor train in normal form. The algorithm runs in $\mathcal O(rs)$ with $r$ the maximum TT-rank, and $s$ the size of the tensor train.
		The resulting tensor train has size at most the size of the starting tensor train.
	\end{theorem}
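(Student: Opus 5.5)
The argument splits into three parts: preservation of the full tensor, the two normal-form conditions, and the cost and size bounds.

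\textbf{Tensor preservation.} Every step of either sweep replaces a product $C_{k-1}[i]\,C_k[j]$ by $(C_{k-1}[i]LDP)\,U_j$, or the analogous product with $U$ pushed into the first mode of $C_{k+1}$. These are equal because the LDPU factorisation is an exact equality (\Cref{prop:LDPU-uniqueness}). Hence the full tensor never changes.

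\textbf{Condition 1 after the first sweep.} After the first sweep, each core $C_k$ with $k\geq2$ is a reshape of the $U$ factor of an LDPU decomposition. So the block matrix $\begin{pmatrix}C_k[0]\mid\ldots\end{pmatrix}$ is in full row-rank unit REF.

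For $C_1$, the block matrix is a single row, hence in REF. It is full row-rank precisely when it is non-zero. If it is zero, the tensor is null by \Cref{prop:nullity-checking}, and the second sweep will collapse the ranks to $0$ from the left. I would treat this degenerate case separately.

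\textbf{Condition 1 survives the second sweep.} This is the step I expect to be the main obstacle. The second sweep replaces $C_{k+1}[i]$ by $U C_{k+1}[i]$, where $U$ is full row-rank unit REF. The new block matrix is $U\begin{pmatrix}C_{k+1}[0]\mid\ldots\end{pmatrix}$, a product of two full row-rank unit REF matrices, since condition 1 held for $C_{k+1}$ before the step. By \Cref{lem:REF-compo,lem:REF-compo-frr}, this product is again full row-rank unit REF.

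This invariant must hold at the moment core $k+1$ is itself processed. At that point the sweep rewrites $C_{k+1}$ as $\operatorname{untwine}(LDP',n_{k+1},P)$, and I must show its block matrix is still REF. To do this I would argue the following: the untwined block matrix $B'$ satisfies $B' = B\,U'^{-1}$ for a right inverse $U'^{-1}$ of the new $U'$. The REF matrix $B$ factors through $U'$, i.e.\ $B = XU'$ with $X=B'$. Then \Cref{lem:unique-REF-factorisation} gives that the unique such $X$ is full row-rank unit REF. This is the delicate point, because it requires relating the row-space factorisation of $\operatorname{intertwine}(C_{k+1},P)$ to the block matrix, using the graphical reshape identities.

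For $C_1$, non-nullity of the first row is preserved, since the tensor is non-null.

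\textbf{Condition 2.} After core $k$ is processed, $\operatorname{intertwine}(C_k,n_k,P_{k-1})P_k^{-1} = LD$, because $LDP'P'^{-1}=LD$. The factor $L$ is unit CEF and $P_k$-conditioned, and $D$ is invertible diagonal, so $LD$ is full column-rank CEF and still $P_k$-conditioned. Later steps of the second sweep never touch $C_k$ again. The last core $C_d$ only gets multiplied on the left, and condition 2 is automatic for it once it is non-null.

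\textbf{Complexity and size.} Processing core $k$ costs an LDPU on an $r_{k-1}\times n_kr_k$ or $n_kr_{k-1}\times r_k$ matrix. By \Cref{prop:LDPU-uniqueness} this is $\mathcal O(n_kr_{k-1}r_k\min(\cdot))=\mathcal O(r\cdot n_kr_{k-1}r_k)$. Pushing the factor into the neighbouring core costs a similar product. Summing over $k$ gives $\mathcal O(rs)$.

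For size, each LDPU replaces a dimension $r_k$ by a rank, which is at most $r_k$. No TT-rank ever increases and the mode dimensions are fixed, so $s$ never increases.
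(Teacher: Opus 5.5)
Your proposal is correct and follows essentially the same route as the paper: condition 1 from the $U$ factors of the first sweep, condition 2 from $\operatorname{intertwine}(C_k,n_k,P_{k-1})P_k^{-1}=LD$, preservation of condition 1 under left multiplication by $U$ via \Cref{lem:REF-compo,lem:REF-compo-frr}, the same per-core cost summation, and non-increasing TT-ranks for the size bound. Your extra argument for the untwine replacement of $C_k$, a step the paper's proof passes over, is sound once the shapes are made precise: since $C_k^{\mathrm{old}}[i]=C_k^{\mathrm{new}}[i]\,U'$ for every $i$, you get $B=B'(I_{n_k}\otimes U')$ rather than $B=B'U'$, and then \Cref{lem:REF-tensor-id} together with \Cref{lem:unique-REF-factorisation} shows that $B'$ is full row-rank unit REF.
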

	
	Notice that if at some point a null core is created, the nullity is spread from one core to the next during the sweeps. When applying the algorithm to a tensor train representing a zero tensor, after the 1st sweep, the first core will be null, by \Cref{prop:nullity-checking}. The second sweep then makes every core null, and the TT-ranks (except $r_0$ and $r_d$) end up being $0$.
	
	Since the size of a tensor train is smaller that $dnr^2$ with $d$ number of modes, $n$ its largest mode, and $r$ its largest TT-rank, we can re-express more roughly the runtime of the algorithm as $\mathcal O(dnr^3)$. To compute it, we assumed naive matrix-matrix multiplication, as well as the unoptimised LDPU decomposition described above. If the usual LU decomposition were used instead of the LDPU here, the complexity could be brought to $\mathcal O(dnr^\omega)$ where $\omega$ is the currently best-known power such that matrix-matrix multiplication can be done in $\mathcal O(n^\omega)$. It is currently not known whether the LDPU decomposition can be done in $\mathcal O(n^\omega)$ time, as is the case for the PLU decomposition. This is left as an open question. In the case of $\mathbb F_2$, it might be possible to accelerate the average-case complexity by adapting the so-called ``method of 4 Russians'' \cite{Albrecht2010Decomposition} to get the LDPU decomposition. This is again left as an open question.
	
	As a direct consequence of the last part of the theorem:
	\begin{corollary}
		The normal form of tensor trains minimises the TT-ranks and size.
	\end{corollary}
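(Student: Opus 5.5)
The corollary states that the normal form minimises both the TT-ranks and the size. I would derive it from the combination of \Cref{thm:reduction} and \Cref{thm:NF-uniqueness}. Fix a tensor $T$ and a mode ordering, and let $N$ be the unique normal form of its tensor train decompositions given by \Cref{thm:NF-uniqueness}. Let $C$ be an arbitrary tensor train decomposition of $T$, with TT-ranks $(r_0,\dots,r_d)$ and size $s$. Running the first sweep followed by the second sweep on $C$ yields, by \Cref{thm:reduction}, a tensor train in normal form. Both sweeps leave the full tensor unchanged, so this output is a normal form of $T$, and by uniqueness it is exactly $N$. The size part then follows at once from the last clause of \Cref{thm:reduction}: the size of $N$ is at most $s$.

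For the TT-ranks, the size bound alone does not suffice: a smaller total size does not imply that every individual rank is smaller. I would therefore strengthen the accounting inside the proof of \Cref{thm:reduction} to track ranks one at a time. In the first sweep, at step $k$ the new rank $r'_{k-1}$ equals the rank of $\begin{pmatrix}C_k[0]\mid\ldots\mid C_k[n_k-1]\end{pmatrix}$. This matrix has $r_{k-1}$ rows, so $r'_{k-1}\le r_{k-1}$. Pushing $LDP$ into $C_{k-1}$ only changes that core's right dimension, to the new rank. In the second sweep, at step $k$ the new rank $r''_k$ equals the rank of $\operatorname{intertwine}(C_k,P)$. This matrix has as many columns as the current rank $r'_k$, so $r''_k\le r'_k\le r_k$. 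Pushing $U$ into $C_{k+1}$ changes only its left dimension. Hence every TT-rank of $N$ is bounded by the corresponding rank of $C$, and since $C$ was arbitrary, $N$ is simultaneously rank-wise minimal.

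Size minimality also follows directly from the rank statement, since $s=\sum_j r_{j-1}r_jn_j$ is monotone in each rank. This gives a second route to the size claim that does not depend on how the size clause of \Cref{thm:reduction} was proved. The main obstacle is keeping the per-rank bookkeeping consistent with the order in which the sweeps update neighbouring cores. The key fact to check is that the sweeps never increase a dimension: each LDPU step replaces a dimension by the rank of a matrix whose corresponding side had that dimension. Once that is verified, both claims follow.
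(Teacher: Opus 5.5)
Your proposal is correct and follows essentially the paper's route. You reduce an arbitrary decomposition with the two sweeps, identify the output with the unique normal form via \Cref{thm:NF-uniqueness}, and use that neither sweep ever increases a TT-rank. The paper asserts this last point in one line in the proof of \Cref{thm:reduction}, then deduces the size bound from the monotonicity of $s=\sum_j r_{j-1}r_jn_j$; your explicit rank-by-rank bookkeeping ($r''_k\le r'_k\le r_k$) just spells that line out, and you are right that the size clause of the theorem alone would not give rank-wise minimality.
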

	
	In the worst cases, the size of a tensor train decomposition will remain exponential with respect to the order of the tensor. This is akin to BDDs, where, in the worst case, the number of edges is exponential in the number of variables $n$. In the latter case, however, the number of edges is no more than twice the number of valuations of its variables (i.e.~$2\cdot2^n$ for $n$ variables). We may wonder whether we can similarly upper-bound the size of a reduced tensor train decomposition, with respect to the total number of entries in the tensor, i.e.~$n_1\cdot...\cdot n_d$. It turns out that here as well, there is only a small constant multiplicative overhead in the worst case:
	\begin{proposition}
		\label{prop:TT-upper-bound}
		Let $T\in F^{n_1\times ... \times n_d}$ be a fully reduced tensor of order $d$. The size $s$ of $T$ is bounded above as: $\displaystyle s\leq \frac83\prod_{i=1}^d n_i$.
		
		This bound can be reached:
		There exists $T\in F^{n_1\times ... \times n_d}$ with $n_1=...=n_d=2$, such that its size is $\displaystyle s = \frac83\prod_{i=1}^d n_i - \mathcal O(1)$.
	\end{proposition}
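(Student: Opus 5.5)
The plan is to bound each TT-rank of the fully reduced train by the rank of the corresponding unfolding matrix, and then sum the resulting term bounds with a geometric-series estimate on each side of a balanced cut. Write $N=\prod_{i=1}^d n_i$, $P_j=\prod_{i\le j}n_i$ and $Q_j=\prod_{i>j}n_i=N/P_j$.

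\textbf{Step 1: a rank bound.} In the normal form, produced by \Cref{alg:NF-from-full-tensor} and unique by \Cref{thm:NF-uniqueness}, the $j$-th TT-rank $r_j$ is the rank of the matrix decomposed by LDPU at step $j$. That matrix is, up to row and column reshapes, the unfolding of $T$ with rows indexed by $(i_1,\dots,i_j)$ and columns by $(i_{j+1},\dots,i_d)$. I will read this off from \Cref{fig:algo-interp}, using the invertibility of the $P$'s and the full column rank of the $L$-part. Hence
\[ r_j\le \min(P_j,Q_j). \]

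\textbf{Step 2: bounding each term.} Assume every $n_i\ge 2$; modes of dimension $1$ can be contracted away, and I would state this assumption explicitly. Then $P_j$ increases and $Q_j$ decreases in $j$. Let $k$ be the last index with $P_k\le Q_k$, so that $P_{k+1}>Q_{k+1}$.
\begin{itemize}
\item For $j\le k$, the term satisfies $r_{j-1}r_jn_j\le P_{j-1}P_jn_j=P_j^2$. Since $P_j\ge 2P_{j-1}$, the sum over $j\le k$ is at most $\tfrac43 P_k^2$.
\item For $j\ge k+2$, symmetrically $r_{j-1}r_jn_j\le Q_{j-1}^2$. The sum over $j\ge k+2$ is at most $\tfrac43 Q_{k+1}^2$.
\item The middle term is $r_kr_{k+1}n_{k+1}\le P_kQ_{k+1}n_{k+1}=N$.
\end{itemize}

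\textbf{Step 3: optimisation.} Set $a=P_k$, $b=Q_{k+1}$ and $m=n_{k+1}\ge2$. Then $abm=N$, with the constraints $a\le bm$ (from $P_k\le Q_k$) and $b<am$ (from $P_{k+1}>Q_{k+1}$). We must bound $N+\tfrac43(a^2+b^2)$. On the hyperbola $ab=N/m$, the quantity $a^2+b^2$ is maximised at an endpoint of the allowed interval. By symmetry take $a=bm$, which gives $a^2=N$ and $b^2=N/m^2$. Therefore $a^2+b^2\le N(1+1/m^2)\le \tfrac54N$, and so
\[ s\le N+\tfrac43\cdot\tfrac54N=\tfrac83N. \]
The delicate point is checking that the endpoint analysis is right and that the boundary cases are handled: $k=0$, $k=d-1$, and the geometric sums truncated at $r_0=r_d=1$. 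The truncations only help.

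\textbf{Step 4: tightness.} Take all $n_i=2$. The extremal configuration above is $a=2b$, i.e.\ $P_k=2^{d/2}$ with $d$ even and the cut exactly in the middle. Choose a tensor whose unfoldings all have the maximal rank $\min(2^j,2^{d-j})$. Then the sums above are exact geometric sums $\sum 4^j$, whose deficit with respect to $\tfrac43$ is only an additive constant, giving $s=\tfrac83 2^d-\mathcal O(1)$.

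The main obstacle is exhibiting such a generic tensor over an \emph{arbitrary} field, including $\mathbb F_2$. I would try a recursive or explicit construction, for instance tensors built from identity-like unfoldings, where the middle unfolding is a permutation matrix of size $2^{d/2}$. For such a tensor, every unfolding $\min(2^j,2^{d-j})$ has full rank because it is a reshaped identity. One must then verify that this full-rank property holds for all $j$ simultaneously.
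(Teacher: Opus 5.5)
Your argument is correct and shares the paper's skeleton: $r_j\le\min(P_j,Q_j)$, a cut at the tipping index, geometric bounds $\frac43P_k^2$ and $\frac43Q_{k+1}^2$ on the two flanks, and a middle term at most $N$. You differ from the paper in the last step and in the tightness part, and in both places your version is the sounder one.

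\textbf{The final inequality.} The paper bounds the two flanks separately, claiming $\prod_{i\le\ell}n_i^2\le N$ and $\prod_{i>\ell}n_i^2\le N$. At the tipping index, however, $P_\ell\le Q_\ell$, so the second inequality is reversed unless $P_\ell=Q_\ell$. Already for $d=3$ with all $n_i=2$ one gets $\ell=1$ and $Q_1^2=16>8=N$. Moreover, neither $P_1^2=4$ nor $Q_2^2=4$ is at most $N/4$, so no termwise split into $N$ and $N/4$ can work. Only the joint bound $P_k^2+Q_{k+1}^2\le N(1+m^{-2})$ holds, and that is exactly what your parametrisation $a^2+b^2=\frac Nm(t+t^{-1})$ with $t=a/b\in[m^{-1},m]$ delivers.

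\textbf{Tightness.} The paper simply asserts $r_k=2^{\min(k,d-k)}$ without exhibiting a tensor; your permutation construction supplies one. The verification you left open is immediate. Take $T[x,y]=[y=\pi(x)]$ with $x,y\in\{0,1\}^{d/2}$. In the unfolding at any cut $j\le d/2$, each column has at most one nonzero entry, by injectivity of $\pi$, and each row has at least one. So the rows have disjoint nonempty supports and are independent over any field, including $\mathbb F_2$. Cuts $j\ge d/2$ are handled the same way with the roles of $x$ and $y$ exchanged. Since every TT satisfies $r_j\ge\operatorname{rank}$ of the $j$-th unfolding, the reduced ranks are exactly $2^{\min(j,d-j)}$.

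\textbf{One correction.} Modes of dimension $1$ cannot be ``contracted away'' for free, because the bound genuinely fails with them. For example, $T\in F^{2\times1\times2}$ with $T[i,0,j]=\delta_{ij}$ has reduced TT-ranks $(1,2,2,1)$ and size $12>\frac{32}3$. So $n_i\ge2$ must be an actual hypothesis, as it tacitly is in the paper.
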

	
	\section{Conclusion}
	
	We have presented here a normal form and a reduction strategy that minimises tensor trains over arbitrary fields in polynomial time. This strategy is based on the LDPU decomposition defined above, and for which we gave a naive implementation. This implementation is akin to the usual Gaussian elimination, and hence likely suffers from two drawbacks: i/ numerical instability when working with floating point arithmetic, and ii/ exponential bit complexity when working with exact representations of values in infinite fields. Those two drawbacks disappear when working with finite fields, such as $\mathbb F_2$, when one wants to generalise decision diagrams. Addressing these two issues in infinite fields is left as an open problem.
	
	We have also given an algorithm that directly decomposes arbitrary tensors into the unique equivalent tensor train in normal form. This algorithm and the one for reduction, the main proofs of the claims, and the definition of the normal form itself, were all made much simpler thanks to the formal graphical representation of the category $\cat{Mat}_F$ and manipulation of its diagrams through the equational theory. Finally, we've shown that we can efficiently extract the first non-zero value and corresponding index, and that in the worst case, a tensor train uses at most $8/3$ times the space required for a naive storage, akin to the BDD case.

	\section*{Acknowledgments}

	The author acknowledges support from the PEPR integrated project EPiQ ANR-22-PETQ-0007 part of Plan France 2030, the ANR projects TaQC ANR-22-CE47-0012 and HQI ANR-22-PNCQ-0002.

	
	\appendix
	
	\newpage
	
	\section{Algorithms}
	
	\begin{algorithm2e}[!ht]
		\caption{LDPU decomposition}
		\label{alg:LDPU}
		\KwData{A matrix $A\in F^{n\times m}$}
		\KwResult{Lower unitriangular matrix $L$, diagonal matrix $D$, permutation matrix $P$ and upper unitriangular matrix $U$ such that $A=LDPU$.}
		\If{$A$ is null}{
			\Return{$\mathbf{0}_{n\times0}$, $\mathbf{0}_{0\times0}$, $\mathbf{0}_{0\times0}$, $\mathbf{0}_{0\times m}$}
		}
		$L\gets I_n$\Comment*[r]{$n\times n$ identity matrix}
		$N\gets [~]$\Comment*[r]{empty list, stores indices of null rows of $A$}
		$p\gets \{\}$\Comment*[r]{empty hash table,maps pivot column to corresponding row}
		\For{$i$ from $0$ to $n-1$}{
			$j\gets$ leading index of $A[i]$\Comment*[r]{$m$ if $A[i]=0$}
			\While{$j < m$ and $\exists r,~(j\mapsto r) \in p$}{
				$L[r] \gets L[r] + A[i,j]L[i]$\Comment*[r]{pivot already exists}
				$A[i] \gets A[i] - A[i,j]A[r]$\;
				$j\gets$ leading index of $A[i]$
			}
			\eIf(\hfill \texttt{/* new pivot found */}){$j < m$}{
				add $(j\mapsto i)$ to $p$\;
				$L[i,i]\gets A[i,j]$\;
				$A[i]\gets A[i] / A[i,j]$
			}(\hfill \texttt{/* row is null */}){
				append $i$ to $N$
			}
		}
		$D\gets$ diagonal of $L$\;
		\For{$i\in N$}{
			remove row $i$ of $L$, $D$, $P$ and $A$\;
			remove column $i$ of $D$
		}
		$P\gets$ the permutation matrix such that $P^{-1}A$ is in REF\;
		\Return{$L^\intercal D^{-1}$, $D$, $P$, $P^{-1}A$}
	\end{algorithm2e}
	
	\newpage
	
	\begin{algorithm2e}[!ht]
	\caption{Leading index and coefficient.}
	\label{alg:leading-index-coeff}
	\KwData{A tensor train $(C_1,...,C_d)$ with modes $n_1,...,n_d$, TT-ranks $r_0,...,r_d$, obtained from the first sweep.}
	\KwResult{The leading index and coefficient of the full tensor.}
	$i\gets 0$\;
	$c \gets 1$\;
	$I\gets [~]$\;
	\For{$k$ from $1$ to $d$}{
		$\ell\gets$ first index such that $C_k[\ell,i]$ is not null\;
		append $\ell$ to $I$\;
		$c \gets c~\times$ leading coefficient of $C_k[\ell,i]$\;
		$i\gets$ leading index of $C_k[\ell,i]$
	}
	\Return{$I$, $c$}
	\end{algorithm2e}
	
	\newpage
	
	\section{Details of \Cref{ex:NF}}
	
	We apply the algorithm on :
	\setlength{\arraycolsep}{4pt}
	\[\vec T^\intercal = \left(\begin{array}{rrrrrrrrrrrr}
		0 & 0 & 1 & -1 & 1 & 2 & 2 & -1 & -1 & 1 & 0 & 2
	\end{array}\right),
	\quad [2,2,3],\quad [~],\quad\begin{pmatrix}1\end{pmatrix}\]
	The different steps are then:
	\begin{enumerate}
		\item $\begin{aligned}[t]
			&\operatorname{LDPU}\left[\begin{pmatrix}
				0 & 0 & 1 & -1 & 1 & 2 \\
				2 & -1 & -1 & 1 & 0 & 2
			\end{pmatrix}\right]\\
			&= \lefteqn{\underbracket{\phantom{
						\begin{pmatrix}
							1 & 0 \\
							0 & 1
						\end{pmatrix}
						\begin{pmatrix}
							1 & 0 \\
							0 & 2
						\end{pmatrix}
						\begin{pmatrix}
							0 & 1\\
							1 & 0
				\end{pmatrix}}}_M}
			\begin{pmatrix}
				1 & 0 \\
				0 & 1
			\end{pmatrix}
			\begin{pmatrix}
				1 & 0 \\
				0 & 2
			\end{pmatrix}
			\overbracket{
				\overbracket{\begin{pmatrix}
						0 & 1\\
						1 & 0
				\end{pmatrix}}^{P'}
				\begin{pmatrix}
					1 & -\frac12 & -\frac12 & \frac12 & 0 & 1 \\
					0 & 0 & 1 & -1 & 1 & 2
				\end{pmatrix}
			}^{P'U}
		\end{aligned}$
		\[c_1[0]=\begin{pmatrix}0&1\end{pmatrix}\qquad c_1[1]=\begin{pmatrix}2&0\end{pmatrix}\]
		\item $\begin{aligned}[t]
			&\operatorname{LDPU}\left[\begin{pmatrix}
				0 & 0 & 1 \\
				-1 & 1 & 2 \\
				1 & -\frac{1}{2} & -\frac{1}{2} \\
				\frac{1}{2} & 0 & 1
			\end{pmatrix}\right]\\
			&= \lefteqn{\underbracket{\phantom{
						\begin{pmatrix}
							1 & 0 & 0 \\
							0 & 1 & 0 \\
							0 & -1 & 1 \\
							\frac{1}{2} & -\frac{1}{2} & 1
						\end{pmatrix}
						\begin{pmatrix}
							1 & 0 & 0 \\
							0 & -1 & 0 \\
							0 & 0 & \frac{1}{2}
						\end{pmatrix}
						\begin{pmatrix}
							0 & 0 & 1 \\
							1 & 0 & 0 \\
							0 & 1 & 0
				\end{pmatrix}}}_M}
			\begin{pmatrix}
				1 & 0 & 0 \\
				0 & 1 & 0 \\
				0 & -1 & 1 \\
				\frac{1}{2} & -\frac{1}{2} & 1
			\end{pmatrix}
			\begin{pmatrix}
				1 & 0 & 0 \\
				0 & -1 & 0 \\
				0 & 0 & \frac{1}{2}
			\end{pmatrix}
			\overbracket{
				\overbracket{\begin{pmatrix}
						0 & 0 & 1 \\
						1 & 0 & 0 \\
						0 & 1 & 0
				\end{pmatrix}}^{P'}
				\begin{pmatrix}
					1 & -1 & -2 \\
					0 & 1 & 3 \\
					0 & 0 & 1
				\end{pmatrix}
			}^{P'U}
		\end{aligned}$
		\[c_2[0]=\begin{pmatrix}1&\frac12&0\\0&0&1\end{pmatrix}\qquad c_2[1]=\begin{pmatrix}\frac12&\frac12&\frac12\\-1&0&0\end{pmatrix}\]
		\item $\begin{aligned}[t]
			&\operatorname{LDPU}\left[\begin{pmatrix}
				0 & 0 & 1 & 1 & -1 & -2 & 0 & 1 & 3
			\end{pmatrix}^\intercal\right]\\
			&= \lefteqn{\underbracket{\phantom{
						\begin{pmatrix}
							0 & 0 & 1 & 1 & -1 & -2 & 0 & 1 & 3
						\end{pmatrix}^\intercal
						\begin{pmatrix}
							1
						\end{pmatrix}
						\begin{pmatrix}
							1
				\end{pmatrix}}}_M}
			\begin{pmatrix}
				0 & 0 & 1 & 1 & -1 & -2 & 0 & 1 & 3
			\end{pmatrix}^\intercal
			\begin{pmatrix}
				1
			\end{pmatrix}
			\overbracket{
				\overbracket{\begin{pmatrix}
						1
				\end{pmatrix}}^{P'}
				\begin{pmatrix}
					1
				\end{pmatrix}
			}^{P'U}
		\end{aligned}$
		\[c_3[0]=\begin{pmatrix}1\\0\\0\end{pmatrix}\qquad c_3[1]=\begin{pmatrix}-1\\1\\0\end{pmatrix}\qquad c_3[2]=\begin{pmatrix}-2\\3\\1\end{pmatrix}\]
	\end{enumerate}
	For a quick sanity check, one can check that we can indeed compute the last $2$ values of $T$, indexed respectively by $(1,1,1)$ and $(1,1,2)$, as:
	\[c_1[1]c_2[1]c_3[1]=\begin{pmatrix}0\end{pmatrix}\qquad
	\text{and}\qquad
	c_1[1]c_2[1]c_3[2]=\begin{pmatrix}2\end{pmatrix}\]
	
	\section{Proofs}
	
	\begin{proof}[Proof of \Cref{prop:LDPU-uniqueness}]
		The existence and complexity of the decomposition are given by \Cref{alg:LDPU}. Let's now prove the uniqueness. First, if $A=0$, the decomposition is obviously unique. Then, assume $A\neq 0$ has two LDPU decompositions: $A=L_1D_1P_1U_1=L_2D_2P_2U_2$, with the above constraints.
		
		Denote $(i_k)_{1\leq k\leq r}$ the pivot rows of $L_1$. By definition, for any $s>0$ and any $k$:
		\[i_k < i_{k+1},\quad L_1[i_k,k]\neq 0,\quad L_1[i_k,k+s] = 0,\quad L_1[i_k-s,k] = 0\]
		Notice that for a given $k$, $A[i_k]$ is linearly independent from the previous rows of $A$. The $i_k$s hence denote the indices of the rows of $A$ that are linearly independent from the previous ones. These pivot rows then have to be the same for $L_2$. Similarly, all column pivots of the $U_i$s have to be the same. Then, consider the first pivot row:
		\[A[i_1] = L_1[i_1,1]D_1[1,1]U_1[\sigma_1(1)] = L_2[i_1,1]D_2[1,1]U_2[\sigma_2(1)]\]
		where $\sigma_i$ is the inverse of the permutation induced by $P_i$. Since $L_1$ and $L_2$ are in unit echelon form, $L_1[i_1,1] = L_2[i_1,1] = 1$.
		Since $U_1$ and $U_2$ are full-rank, $U_1[\sigma_1(1)]$ and $U_2[\sigma_2(1)]$ are non-zero, hence they are colinear, so they must have the same leading index. Since they share the same pivot columns, we get $\sigma_1(1) = \sigma_2(1)$. Since both have $1$ as leading coefficient (since the $U_i$ are unit), $D_1[1,1] = D_2[1,1]$ and $U_1[\sigma_1(1)]=U_2[\sigma_2(1)]$.
		
		By induction, suppose that $\sigma_1(i)=\sigma_2(i)$,  $U_1[\sigma_1(i)]=U_2[\sigma_1(i)]$ and $D_1[i,i] = D_2[i,i]$ for all $j < k$. Then:
		\begin{align*}
		A[i_{k}]
		&= \sum_{\substack{0 < j< k\\\sigma_1(j)<\sigma_1(k)}} L_1[i_{k},j]D_1[j,j]U_1[\sigma_1(j)] + D_1[k,k]U_1[\sigma_1(k)]\\
		&= \sum_{\substack{0 < j< k\\\sigma_2(j)<\sigma_2(k)}} L_2[i_{k},j]D_2[j,j]U_2[\sigma_2(j)] + D_2[k,k]U_2[\sigma_2(k)]\\
		&= \sum_{\substack{0 < j< k\\\sigma_1(j)<\sigma_1(k)}} L_2[i_{k},j]D_1[j,j]U_1[\sigma_1(j)] + D_2[k,k]U_2[\sigma_2(k)]
		\end{align*}
		where we used the fact that $L_1[i_{k},k] = L_2[i_{k},k] = 1$.
		Since we only keep the terms where $\sigma_1(j)<\sigma_1(k)$, all the $U_1[\sigma_1(j)]$ have a smaller leading index than $U_i[\sigma_i(k)]$. By taking each $U_1[\sigma_1(j)]$ in order by ascending leading coefficient, we can see that it is forced that $L_1[i_{k},j] = L_2[i_{k},j]$. We are then left with $D_1[k,k]U_1[\sigma_1(k)]=D_2[k,k]U_2[\sigma_2(k)]$. With the same reasoning as for $k=1$ above, we get $\sigma_1(1) = \sigma_2(1)$, $D_1[1,1] = D_2[1,1]$ and $U_1[\sigma_1(1)]=U_2[\sigma_2(1)]$.
		
		At then end of this induction, we have shown that $D_1=D_2$, $\sigma_1=\sigma_2$, and $U_1 = U_2$.
		
		In the induction process, we have also shown that all pivot rows of $L_1$ and $L_2$ are equal. It is then easy to prove that non-pivot rows of $L_1$ and $L_2$ are also equal. Suppose $r$ is not a pivot of $L_1$ and $L_2$. If $r<i_1$, then $A[r]=0$, so $L_1[r]=L_2[r]=0$. Otherwise, let $i_k$ be the last pivot before $r$. We then have $A[r]=\sum\limits_{j=1}^k L_1[r,j]D_1[j,j]U_1[\sigma_1(j)]=\sum\limits_{j=1}^k L_2[r,j]D_2[j,j]U_2[\sigma_2(j)]=\sum\limits_{j=1}^k L_2[r,j]D_1[j,j]U_1[\sigma_1(j)]$. By linear independence of the family $\{U_1[\sigma_1(1)], ..., U_1[\sigma_1(k)]\}$, we get $L_1[r,j]=L_2[r,j]$ for all $j$, i.e.~that row $r$ of $L_1$ and $L_2$ are equal. This finishes the proof of uniqueness of the LDPU decomposition.
	\end{proof}
	
	\begin{proof}[Proof of \Cref{lem:REF-tensor-id}]
		The network above represents the Kronecker product of $U_0$ and $U_1$, which is in full row-rank (unit) row-echelon form if both $U_0$ and $U_1$ are.
	\end{proof}
	
	\begin{proof}[Proof of \Cref{lem:REF-compo}]
		A matrix in row-echelon form can be seen as being in the form $R=SU$ where $S$ is the identity matrix with missing rows, and $U$ is a square upper-triangular matrix. Then, the composition of two such matrices is $R_1R_2=S_1U_1S_2U_2$. Notice that the product $U_1S_2$ can be seen as inserting $0$-columns to $U_1$, resulting in a REF matrix $U_1S_2= S_3U_3$, so $R_1R_2=S_1S_3U_3U_2$. The product $S_1S_3$ can still be seen as the identity with missing rows, and $U_3U_2$ is upper-triangular, as the product of two upper-triangular matrices. Hence, the result is in REF. If $R_1$ and $R_2$ are both unit, we can pick $U_1$ and $U_2$ to be unit (a.k.a.~unitriangular). In which case we easily see that the result is unit.
	\end{proof}
	
	\begin{proof}[Proof of \Cref{lem:REF-right-inverse}]
		Let's prove both implications:
		\begin{itemize}
			\item[$\Leftarrow$)] A matrix in full row-rank row-echelon form can be seen as being in the form $R=SU$ where $S$ is the identity matrix with missing rows, and $U$ is a square invertible upper-triangular matrix. The condition on $S$ implies that $SS^\intercal=I$. We can then see that $R U^{-1}S^\intercal = I$, so $R$ has a right inverse $U^{-1}S^\intercal$ which is in REF by \Cref{lem:REF-compo} since both $U^{-1}$ and $S^\intercal$ are in REF. If $R$ is unit, we can choose $U$ to be unitriangular (i.e.~its diagonal entries are all $1$), in which case its inverse is also unitriangular. This makes the right inverse $U^{-1}S^\intercal$ a unit REF.
			\item[$\Rightarrow)$] Suppose $U$, with $n$ rows, has a right inverse $U^\diamond$: $UU^\diamond=I_n$. We then have $n = \operatorname{rk}(I_n) \leq \min(\operatorname{rk}(U),\operatorname{rk}(U^\diamond))\leq \operatorname{rk}(U)\leq n$. This forces $\operatorname{rk}(U)=n$ i.e.~$U$ full row-rank.\qedhere
		\end{itemize}
	\end{proof}
	
	\begin{proof}[Proof of \Cref{lem:REF-compo-frr}]
		By \Cref{lem:REF-compo}, the product is in REF. If the two matrices are full row-rank, then they have a right inverse, so the product also has a right inverse, which means it is also full row-rank by \Cref{lem:REF-right-inverse}.
	\end{proof}
	
	\begin{proof}[Proof of \Cref{lem:unique-REF-factorisation}]
		Since $U_1$ and $U_2$ are in full row-rank REF, they have a right inverse $U_1^\diamond$ and $U_2^\diamond$ by \Cref{lem:REF-right-inverse}, hence $XU_1 = X'U_1 = U_2$ implies $X=X'$. Moreover, since $U_1^\diamond$ is in REF, by \Cref{lem:REF-compo}, $X = U_2U_1^\diamond$ is also in REF. We also have $XU_1U_2^\diamond=I$, hence $X$ has a right inverse. Hence, $X$ must be full row-rank by \Cref{lem:REF-right-inverse}. If $U_1$ and $U_2$ are unit, so is $U_1^\diamond$, hence  $X = U_2U_1^\diamond$ is also unit by \Cref{lem:REF-compo}.
	\end{proof}
	
	\begin{proof}[Proof of \Cref{prop:algo-in-NF}]
	Let $(C_1,...,C_d)$ be a tensor train obtained from \Cref{alg:NF-from-full-tensor}, with associated permutation matrices $P_0,P_1,...,P_d$. The second condition of normal forms is easy to verify, for a given core $c$ is obtained as $\operatorname{untwine}(LDP',n,P)$, hence $\operatorname{intertwine}(c, n, P)P'^{-1} = LDP'P'^{-1}=LD$, which is in full column-rank column-echelon form.
	
	The first condition for the first core is obviously verified.
	Then, for $i>1$, we can see from the graphical interpretation that:
	\[\tikzfig{algo-partial}\]
	so that the cores $C_{i}$, ..., $C_d$ are built out of $U_{i-1}$, which is in full row-rank row-echelon form, for it is obtained by LDPU decomposition, i.e.:
	\def\fig{TT-trailing-U_k}
	\begin{align*}
		
		\eq{}
	\end{align*}
	We can then easily prove by induction, from $i=d$ down to $2$ that the first condition of normal forms is verified. For $i=d$, the result is obvious from the last equation,taking $i\leftarrow d$. Then, suppose the result has been proven for $i+1$, then:
	\def\fig{TT-trailing-U_k}
	\begin{align*}
		
		&\eq{}\input{./figures/\fig/\fig_02.tikz}\\
		&\eq{}\input{./figures/\fig/\fig_03.tikz}
	\end{align*}
	By \Cref{lem:REF-tensor-id} and by the induction hypothesis, the right part of the last diagram (the Kronecker product of the identity with $U_i$) is in full row-rank unit REF, and by the previous remarks, the whole diagram is itself in full row-rank unit REF. By \Cref{lem:unique-REF-factorisation}, \def\fig{1st-sweep}$\tikzfig{NF-REF-2}$ must be in full row-rank unit REF.
	\end{proof}
	
	\begin{proof}[Proof of \Cref{thm:NF-uniqueness}]
	The existence is provided by \Cref{alg:NF-from-full-tensor}, together with \Cref{prop:algo-in-NF}. Let's then prove the uniqueness.
	
	Let the tensor train $(C_1,...,C_d)$ represent tensor $T$ in normal form, with associated permutation matrices $P_0,P_1,...,P_d$. 
	We first show that the partial tensor train:
	\def\fig{TT-trailing-U_k}
	\begin{align*}
		
	\end{align*}
	is full row-rank and in unit row-echelon form for $i>1$, by induction on $i$ form $d$ down to $2$. The case $i=d$ is provided by the first constraint on the normal form. Then, assuming the result is proven for $i+1$, we again have:
	\def\fig{TT-trailing-U_k}
	\begin{align*}
		
		&\eq{}\input{./figures/\fig/\fig_02.tikz}\\
		&\eq{}\input{./figures/\fig/\fig_03.tikz}
	\end{align*}
	where $U_i$ is the partial tensor train starting at $i+1$, which by induction hypothesis is full row-rank and in unit REF. By the normal form hypothesis, the left part of the last diagram is also full row-rank and in unit REF, and by \Cref{lem:REF-tensor-id}, so is the right part. By \Cref{lem:REF-compo,lem:REF-compo-frr}, the resulting diagram is full row-rank and in unit REF.
	
	Similarly, we can show that the first part of the tensor train:
	\[\tikzfig{TT-partial-interp-CEF}\]
	is full column-rank and in column-echelon form for $1\leq i\leq d$, by induction on $i$. The case $i=1$ is given by the second constraint on the normal form. Then:
	\def\fig{TT-partial-interp-CEF-prf}
	\begin{align*}
		&\\
		&\eq{}\\
		&\eq{}\input{./figures/\fig/\fig_02.tikz}
	\end{align*}
	where $L_i$ is by the induction hypothesis in full column-rank CEF. Using the transpose of \Cref{lem:REF-tensor-id}, the left part of the last diagram is in full column-rank CEF. The right part is in full column-rank CEF by the normal form hypothesis, so, using the transpose of \Cref{lem:REF-compo,lem:REF-compo-frr}, the whole diagram is in full column-rank CEF.
	
	We can now show that each $C_i$ is uniquely defined, starting from $i=1$. By separating the first wire from the others, we get:
	\def\fig{NF-uniqueness-1}
	\begin{align*}
		
		\eq{}
	\end{align*}
	where the left part $L$ is in full column-rank CEF and $P_1$-conditioned by hypothesis, and the right part $U$ is in full row-rank REF. We hence exactly get the LDPU decomposition of the tensor on the lhs (where $L$ contains the diagonal). This uniquely fixes $C_1$ and $P_1$, as well as the right part, hence together, it uniquely fixes
	\[\input{./figures/\fig/\fig_02.tikz}\]
	Suppose now by induction that all the $C_i$ and $P_i$ have been uniquely fixed for $i<k$, as well as:
	\def\fig{NF-uniqueness-k}
	\[\]
	By reshaping and introducing $P_k$, the following diagram is also uniquely defined:
	\[\]
	which is again, by the constraints of the normal form, in LDPU form. By uniqueness of the LDPU, $C_k$ and $P_k$ are uniquely fixed, as well as the composition of $C_{k+1}$, ..., $C_d$ on the right. We hence prove by induction that all the $C_k$ are uniquely fixed (as well as the permutations $P_k$). The normal form of tensor trains is unique.
	\end{proof}
	
	\begin{proof}[Proof of \Cref{prop:nullity-checking}]
		If $C_1$ is null, then the full tensor is also obviously null. We now prove the other implication. 
		Consider the following matrix obtained from the tensor train $C$, that we suppose to represent the null tensor:
		\[\tikzfig{nullity}\]
		Since the tensor train is obtained by the first sweep, the right part of the network (starting at $C_2$) is in full row-rank row-echelon form. Since the full tensor is null, we get that either $r_1=0$, or all the entries of $C_1$ are zero. In either case, $C_1$ is a null core.
	\end{proof}
	\begin{proof}[Proof of \Cref{prop:leading-index-coeff}]
	First, we can compute the complexity of the algorithm. For each $k$, looking for the leading index of $C_k[j,i]$ (and checking if it has one) can be done in $\mathcal O(n_kr_k)$. The other commands in the loop are in $\mathcal O(1)$, we hence have an algorithm that runs in $\mathcal O(\sum_{k=1}^d n_k r_k)$, which we can roughly simplify to $\mathcal O(dnr)$.
	
	Suppose that the leading index of tensor $T$ is $(\ell_1,...,\ell_d)$, then $T[\ell_1,...,\ell_d]\neq0$ and for any smaller index $(l_1,...,l_d)<(\ell_1,...,\ell_d)$ (in lexicographic order), $T[l_1,...,l_d]=0$. This also implies that for all $k$, if $l_k<\ell_k$, then $T[\ell_1,...,\ell_{k-1},l_k]=0$. We use this fact to find $\ell_k$ one index at a time, starting at $k=1$.
	
	Recall that:
	\[\tikzfig{li-lc-1}\]
	with $U_1$ in full row-rank REF. Because of this last property, $C_1[j]U_1$ is null iff $C_1[j]$ is null. We hence pick $\ell_1$ as the smallest index such that $C_1[\ell_1]\neq0$. We then store $i_1$ the leading index of $C_1[\ell_1]$.
	
	Suppose now we have the first $k$ indices $\ell_1,...,\ell_k$, and that the leading index of $v_k = C_1[\ell_1]...C_k[\ell_k]$ is $i_k$. Then:
	\def\fig{li-lc-k}
	\begin{align*}
		
		\eq{}
	\end{align*}
	where $U_{k+1}$ is in full row-rank REF, so looking for the smallest $j$ such that $v_kC_{k+1}[j]U_{k+1}\neq 0$ reduces to looking for the smallest $j$ such that $v_kC_{k+1}[j]\neq0$. Since $C_{k+1}[j]$ is itself in REF, and $v_k$ has leading index $i_k$, $v_kC_{k+1}[j]\neq0$ iff $C_{k+1}[j,i_k]\neq0$, looking for this $j$ then yields $\ell_{k+1}$. Once leading indices are found, computing the leading coefficient becomes easy, and can be done as we go.
	\end{proof}
	
	\begin{proof}[Proof of \Cref{thm:reduction}]
	Let's first compute the complexity of the algorithm, by assuming we apply it on tensor train $(C_1,...,C_d)$ with modes $n_1,...,n_d$ and TT-ranks $r_0=1,r_1,...,r_{d-1},r_d=1$. Let's write $s$ the size of this tensor train.
	Let's first compute the complexity of the first sweep. At core $k$, we compute the LDPU of $\begin{pmatrix}
		C_k[1]\mid \ldots
	\end{pmatrix}$ which is a $r_{k-1}\times n_kr_k$ matrix, hence in $\mathcal O(n_kr_kr_{k-1}\min(r_{k-1},n_kr_k))$. We also compute the product $C_{k-1}[i]LDP$ whose complexity is dominated by that of the product $C_{k-1}[i]L$ (since products by diagonal and permutation matrices are much simpler). This product is naively in $\mathcal O(r_{k-2}r_{k-1}\min(r_{k-1},n_kr_k))$, and is done $n_{k-1}$ times. The cost of building and breaking down the block matrices is dominated by these two costs. This is computed for $k\in\{2,...,d\}$, and on the one hand: \[\sum_{k=2}^d n_kr_kr_{k-1}\min(r_{k-1},n_kr_k) \leq r\sum_{k=2}^d n_kr_kr_{k-1}\leq rs\]
	On the other hand:
	\[\sum_{k=2}^d n_{k-1}r_{k-2}r_{k-1}\min(r_{k-1},n_kr_k)\leq r\sum_{k=2}^d n_{k-1}r_{k-2}r_{k-1}\leq rs\]
	Hence, the complexity of the 1st sweep lies in $\mathcal O(rs)$.
	
	For the second sweep, at core $k$, we do an LDPU of a $n_kr_{k-1}\times r_k$ matrix, hence in $\mathcal O(n_kr_kr_{k-1}\min(r_k,n_kr_{k-1}))$. We also compute the product $UC_{k+1}[i]$, which is naively in $\mathcal O(r_{k}r_{k+1}\min(r_{k-1}n_k,r_k))$. This product is done $n_{k+1}$ times. Again, the costs of $\operatorname{intertwine}$ and $\operatorname{untwine}$ are dominated by the above. These computations are done for all $k\in\{1,d-1\}$. On the one hand: \[\sum_{k=1}^{d-1}n_kr_kr_{k-1}\min(r_k,n_kr_{k-1})\leq r\sum_{k=1}^{d-1}n_kr_kr_{k-1}\leq rs\]
	On the other hand: \[\sum_{k=1}^{d-1}n_{k+1}r_{k}r_{k+1}\min(r_{k-1}n_k,r_k)\leq r\sum_{k=1}^{d-1}n_{k+1}r_{k}r_{k+1}\leq rs\]
	Both sweeps are done un $\mathcal O(rs)$, hence, so is the full algorithm.
	
	The first sweep obviously forces the tensor train to verify the first condition of the normal form. The second sweep also quite directly forces it to verify the second condition. One then simply has to check that this second sweep does not undo the first. Since we are only updating each core by 
	applying some $U$ to its first mode, and since that $U$ is obtained from an LDPU, and hence in full row-rank unit REF, by \Cref{lem:REF-compo,lem:REF-compo-frr}, the first condition on normal form is preserved.
	
	In both sweeps, we cannot augment the TT-ranks. They may however decrease, specifically when computing the LDPU decomposition of non-full-rank matrices. Since the TT-ranks cannot augment, the size of the tensor train cannot augment either.
	\end{proof}
	
	\begin{proof}[Proof of \Cref{prop:TT-upper-bound}]
	This is better seen from \Cref{alg:NF-from-full-tensor}, which produces a tensor train in normal form. 
	In the worst case, the LDPU is always applied on a full-rank matrix. In that case, the matrix in question, at step $k$, is of size $n_1...n_k\times n_{k+1}...n_d$, so the rank $r_k$ is $\min(n_1...n_k, n_{k+1}...n_d)$. Let $\ell$ be the "tipping point", i.e.~the index where:
	\[r_k=\begin{cases}
		\prod_{i=1}^k n_i& \text{ if } k\leq\ell\\
		\prod_{i=k+1}^d n_i& \text{ if } k>\ell
	\end{cases}\]
	Then:
	\begin{align*}
		s &= \sum_{k=1}^d n_kr_{k-1}r_k\\
		&= \sum_{k=1}^{\ell} n_k\prod_{i=1}^{k-1}n_i\prod_{i=1}^k n_i +
		n_{\ell+1}\prod_{i=1}^{\ell}n_i\prod_{i=\ell+2}^{d}n_i+\!\!
		\sum_{k=\ell+2}^{d}\!\! n_k\prod_{i=k}^{d}\!n_i\!\!\prod_{i=k+1}^d \!\!n_i\!\!\!\\
		&= \sum_{k=1}^{\ell} \prod_{i=1}^k n_i^2 +
		\prod_{i=1}^{d}n_i+
		\sum_{k=\ell+2}^{d} \prod_{i=k}^{d}n_i^2\\
		&= \prod_{i=1}^{\ell}n_i^2\sum_{k=1}^{\ell} \frac1{\prod_{i=k+1}^\ell n_i^2} +
		\prod_{i=1}^{d}n_i+
		\prod_{i=\ell+2}^{d}n_i^2\sum_{k=\ell+2}^{d} \frac1{\prod_{i=\ell+2}^{k-1}n_i^2}\\
		&\leq \prod_{i=1}^{\ell}n_i^2\sum_{k=0}^{\ell-1} \frac1{4^k} +
		\prod_{i=1}^{d}n_i+
		\prod_{i=\ell+2}^{d}n_i^2\sum_{k=0}^{d-\ell-2} \frac1{4^k}\\
		&\leq \frac43 \prod_{i=1}^{\ell}n_i^2 +
		\prod_{i=1}^{d}n_i+
		\frac43\frac1{n_{\ell+1}^2}\prod_{i=\ell+1}^{d}n_i^2\\
		&\leq \left(\frac43+1+\frac13\right)\prod_{i=1}^{d}n_i
		=\frac83\prod_{i=1}^{d}n_i
	\end{align*}
	where we used $n_i\geq2$, $\sum_{k=0}^*\frac1{4^k}\leq \frac43$, as well as $\prod_{i=1}^\ell n_i^2\leq \prod_{i=1}^d n_i$ and $\prod_{i=\ell+1}^{d}n_i^2\leq \prod_{i=1}^d n_i$.
	
	Let's now show that this bound is reached up to an additive constant when $n_1=...=n_d=2$. Assume $d$ is even: $d=2d'$. Then $\prod_{i=1}^dn_i = 4^{d'}$. We can see that $r_k = 2^k$ and $r_{d'+k}=2^{d'-k}$ for $k\leq d'$. Hence:
	\begin{align*}
		s&=\sum_{k=1}^d n_kr_{k-1}r_k
		= \sum_{k=1}^{d'} n_kr_{k-1}r_k + \sum_{k=d'}^{2d'-1} n_kr_{k-1}r_k
		= 2\sum_{k=1}^{d'}r_{k-1}r_k + 2\sum_{k=0}^{d'-1}r_{k+d'}r_{k+d'+1}\\
		&= 2\sum_{k=0}^{d'-1}2^k\cdot2^{k+1} + 2\sum_{k=0}^{d'-1}2^{d'-k}2^{d'-k-1}
		= 2\sum_{k=0}^{d'-1}2^k\cdot2^{k+1} + 2\sum_{k=0}^{d'-1}2^k\cdot2^{k+1}\\
		&=8\sum_{k=0}^{d'-1}4^k
		=8\frac{4^{d'}-1}3 = \frac83\prod_{i=1}^dn_i - \frac83 \qedhere
	\end{align*}
	\end{proof}
	
\end{document}